\def\BibTeX{{\rm B\kern-.05em{\sc i\kern-.025em b}\kern-.08em
    T\kern-.1667em\lower.7ex\hbox{E}\kern-.125emX}}
\newtheorem{theorem}{\bf Theorem}
\newtheorem{remark}{Remark}
\begin{document}

\title{MIMO Communications with $1$-bit RIS: Asymptotic Analysis and Over-the-Air Channel Diagonalization\thanks{This work has been supported by the SNS JU project TERRAMETA
under the EU’s Horizon Europe research and innovation program under Grant
Agreement No 101097101, including top-up funding by UKRI under the UK
government’s Horizon Europe funding guarantee.
}}

\author{\IEEEauthorblockN{Panagiotis Gavriilidis, Kyriakos Stylianopoulos, and George C. Alexandropoulos}
\IEEEauthorblockA{Department of Informatics and Telecommunications,
National and Kapodistrian University of Athens, Greece\\
emails: \{pangavr, kstylianop, alexandg\}@di.uoa.gr
}\vspace{-0.8cm}}

\maketitle

\begin{abstract}
This paper presents an asymptotic analysis of Multiple-Input Multiple-Output (MIMO) systems assisted by a \(1\)-bit Reconfigurable Intelligent Surface (RIS) under Ricean fading conditions. Using random matrix theory, we show that, in the asymptotic regime, the dominant singular values and vectors of the transmitter–RIS and RIS–receiver channels converge to their deterministic Line-of-Sight (LoS) components, almost irrespective of the Ricean factors. This enables RIS phase configuration using only LoS information through a closed-form Sign Alignment (SA) rule that maximizes the channel gain. Furthermore, when the RIS is asymptotically larger than the transceiver arrays, proper RIS configuration can render the end-to-end MIMO channel in the capacity formula asymptotically diagonal, thereby eliminating inter-stream interference and enabling Over-The-Air (OTA) spatial multiplexing without channel knowledge at the transmitter. Building on this result, a waterfilling-inspired SA algorithm that allocates RIS elements to spatial streams, based on the asymptotic singular values and statistical channel parameters, is proposed. Simulation results validate the theoretical analyses, demonstrating that the proposed schemes achieve performance comparable to conventional Riemannian manifold optimization, but with orders of magnitude lower runtime.
\end{abstract}
\begin{IEEEkeywords}
Reconfigurable intelligent surface, random matrix theory,  asymptotic analysis, discrete optimization.
\end{IEEEkeywords}
\vspace{-0.3cm}
\section{Introduction}

Reconfigurable Intelligent Surfaces (RISs) have emerged as a transformative solution to dynamically shape wireless propagation environments with low power consumption and hardware complexity \cite{EURASIP_RIS_paper,10596064}. By manipulating predominantly the phase of incident signals, RISs can substantially improve coverage and spectral efficiency, especially in scenarios with blocked or attenuated direct paths \cite{AJG2024}. To counter double path loss, while exploiting their low cost scalability, RISs are envisioned to consist of thousands of unit cells, far exceeding traditional array sizes. However, two major challenges arise: \textit{i}) each element typically supports only $1$-bit phase control (i.e., $\{0, \pi\}$) \cite{mmWave_PIN_RIS_power,power_consumption_practical_modeling}, rendering the configuration problem an NP-hard one; and \textit{ii}) acquiring high-dimensional Channel State Information (CSI) becomes prohibitively complex~\cite{Tsinghua_RIS_Tutorial_ALL}. 

Recent works, such as \cite{stylianopoulos2024signalignment}, have tackled \(1\)-bit RIS optimization for signal power maximization in Single-Input-Single-Output (SISO) communication systems. However, multi-stream Multiple-Input Multiple-Output (MIMO) scenarios under such discrete-phase constraints remain largely unexplored. There exist, nonetheless, several asymptotic studies focusing on ergodic capacity analyses with continuous phase control, where correlation matrices are derived from the angle-spread properties of the wireless channel. In partcular, the authors in \cite{moustakas2021capacity} and \cite{MA_TWC_2024} proposed
ergodically optimal RIS designs for mult-RIS-aided MIMO and the MIMO Multiple-Access Channel
(MAC), respectively. Complementarily, several large-system analyses have been developed to obtain deterministic equivalents of the ergodic rate in RIS-assisted systems using random matrix theory and statistical physics tools. 
In \cite{Zhang2021LargeSystemRIS_statistical_CSIT}, the achievable rate of RIS-assisted MIMO systems under correlated Ricean fading was characterized in closed form, and an Alternate Optimization (AO) method was proposed for joint transmit covariance and RIS phase-shift design. 
To account for limited scattering, \cite{Zhang2022LargeSystemRIS_Channel_rankDef} examined channel rank deficiency, provided an ergodic rate approximation, and proposed an AO scheme for joint optimization. 
Using replica theory, \cite{Xu2021LargeSystemRIS_CorrRicianFading} analyzed RIS-assisted MIMO MAC under spatially correlated Ricean fading, derived deterministic equivalents for the asymptotic sum rate, and developed an AO-based joint transceiver and RIS design. 
Finally, \cite{You2021LargeSystemRIS_PartialCSI_uplink} considered RIS-assisted multiuser MIMO uplink transmission with partial CSI, formulating an energy-efficiency maximization problem solved via deterministic equivalents and block-coordinate optimization.

In this paper, different from the vast majority of the state of the art that considers continuous RIS phase configuration control and analyzes ergodic (average)
performance metrics, we focus on studying instantaneous performance guarantees in $1$-bit RIS-aided MIMO systems. In contrast to the AO- or gradient-based schemes developed in \cite{Xu2021LargeSystemRIS_CorrRicianFading,You2021LargeSystemRIS_PartialCSI_uplink,Zhang2022LargeSystemRIS_Channel_rankDef,Zhang2021LargeSystemRIS_statistical_CSIT}, we herein present analytical closed-form frameworks for both channel gain and capacity maximization under discrete RIS response constraints and without the requirement for full CSI. It is proven that the burden of
channel diagonalization can be shifted away from the transmitter
and effectively realized OTA through the RIS. These results yield new theoretical insights into the asymptotic
diagonalization properties and scaling laws of large $1$-bit RIS-assisted
MIMO systems.
\vspace{-0.2 cm}
\subsection{Contributions}
\vspace{-0.1 cm}
This work develops a theoretical framework for the asymptotic optimization of RIS-aided MIMO systems with \(1\)-bit phase control under Ricean fading conditions. The main contributions of this paper are summarized in the following.
\begin{itemize}
    \item We prove that the dominant singular values and vectors of the Transmitter (TX) to the RIS as well as of the RIS to the Receiver (RX) channels converge to their deterministic Line-of-Sight (LoS) components (almost) irrespective of the Ricean \(K\)-factors.
    \item A closed-form Sign Alignment (SA) RIS design is derived that can asymptotically maximize the end-to-end channel gain, requiring only the sign patterns of the LoS channel components. 
    \item It is proven that, when the size of the RIS panel grows asymptotically larger than the size of the transceivers, the effective RIS-parametrized MIMO channel becomes asymptotically diagonal, allowing interference-free spatial multiplexing without TX-side CSI or precoding.
    \item Capitalizing on the RIS-enabled OTA channel diagonalization, a low complexity Waterfilling-inspired SA (W-SA) scheme is presented that allocates unit elements of the RIS to spatial streams based on the channel's asymptotic singular values and statistical CSI.
\item It is shown that both proposed designs scale linearly and without iterations w.r.t. the number of RIS elements, and they achieve performance comparable to full-CSI iterative methods with orders of magnitude lower runtime.

\end{itemize}

\section{System Model}

We consider an RIS-aided MIMO communication system, where the TX and RX are equipped with \(N_{\rm T}\) and \(N_{\rm R}\) antennas, respectively, while the RIS consists of \(N_{\rm S}\) reflecting elements. Let \(\mathbf{x} \in \mathbb{C}^{N_{\rm T} \times 1}\) denote the transmitted signal, \(\mathbf{y} \in \mathbb{C}^{N_{\rm R} \times 1}\) the received signal, and \(\mathbf{n} \in \mathbb{C}^{N_{\rm R} \times 1} \sim \mathcal{CN}(\mathbf{0}_{N_{\rm R} \times 1}, \sigma^2 \mathbf{I}_{N_{\rm R}})\) the additive thermal noise at the receiver, which are related as:
\begin{equation}\label{eq: received signal}
    \mathbf{y} = \sqrt{{\rm P_L}} \mathbf{H}_{\rm R}^{\rm H} \boldsymbol{\Phi} \mathbf{H}_{\rm T} \mathbf{x} +  \mathbf{n}.
\end{equation}
In this expression, \(\mathbf{H}_{\rm T} \in \mathbb{C}^{N_{\rm S } \times N_{\rm T}}\) is the TX-RIS channel with \(E[{\rm Tr}(\mathbf{H}_{\rm T} \mathbf{H}_{\rm T}^{\rm H})] = N_{\rm T} N_{\rm S}\), and \(\mathbf{H}_{\rm R}\) denotes the RIS-RX channel with \(E[{\rm Tr}(\mathbf{H}_{\rm R} \mathbf{H}_{\rm R}^{\rm H})] = N_{\rm R} N_{\rm S}\). The diagonal matrix \(\boldsymbol{\Phi} \in \mathbb{C}^{N_{\rm S} \times N_{\rm S}}\) denotes the RIS's reflection matrix, with each non-zero entry being \(1\)-bit reconfigurable, while \({\rm P_L}\) is the path loss term of the cascaded links. The transmitted signal satisfies the power constraint \({\rm Tr}(\mathbf{x} \mathbf{x}^{\rm H}) \leq P_T\), where \(P_T\) is the available power at the TX. We make the common assumption that the direct TX-RX channel is blocked, which constitutes the generic scenarios where RIS's role can be prominent~\cite{GA_AoI_BoI}.

Let us express the cascaded channel \(\mathbf{\tilde{H}} \triangleq \mathbf{H}^{\rm H}_{\rm R} \boldsymbol{\Phi} \mathbf{H}_{\rm T}\) in a more tractable form that highlights the impact of the RIS phase profile on its singular values. To this end, we apply the Singular Value Decomposition (SVD) to \(\mathbf{H}_{\rm R}^{\rm H}\) and \(\mathbf{H}_{\rm T}\), yielding \(\mathbf{H}_{\rm R}^{\rm H} = \mathbf{U}_{\rm R} \mathbf{D}_{\rm R} \mathbf{V}_{\rm R}^{\rm H}\) and \(\mathbf{H}_{\rm T} = \mathbf{U}_{\rm T} \mathbf{D}_{\rm T} \mathbf{V}_{\rm T}^{\rm H}\), where \(\mathbf{U}_x\) and \(\mathbf{V}_x\) are their left and right singular matrices, and \(\mathbf{D}_x\) is a diagonal matrix with singular values in decreasing order, i.e., \([\mathbf{D}_{\rm x}]_{1,1}>[\mathbf{D}_{\rm x}]_{2,2}>\ldots> [\mathbf{D}_{\rm x}]_{N_x,N_x}\). The subscript ``\({\rm x}\)'' stands for either ``\({\rm T}\)'' or ``\({\rm R}\)'', indicating respectively TX and RX. Then, the RIS-parametrized end-to-end channel  can be rewritten as follows:
\begin{equation}
    \mathbf{\tilde{H}} = \mathbf{U}_{\rm R} \left(\mathbf{D}_{\rm R}  \mathbf{V}_{\rm R}^{\rm H} \boldsymbol{\Phi} \mathbf{U}_{\rm T}  \mathbf{D}_{\rm T} \right)\mathbf{V}_{\rm T}^{\rm H}.
\end{equation}
It can be easily shown that the central term admits the following entrywise structure:
\begin{equation}\label{eq:SVD form}
    \left[\mathbf{D}_{\rm R} \mathbf{V}_{\rm R}^{\rm H} \boldsymbol{\Phi} \mathbf{U}_{\rm T} \mathbf{D}_{\rm T}\right]_{i,j} 
    = d_{{\rm R},i} d_{{\rm T},j}\left( \mathbf{v}_{{\rm R},i}^{*} \odot \mathbf{u}_{{\rm T},j} \right)^{\rm T} \boldsymbol{\phi},
\end{equation}
where \(\mathbf{v}_{{\rm R},i}\) and \(\mathbf{u}_{{\rm T},j}\) are the \(i\)-th (\(i=1,\ldots,N_{\rm R}\)) and \(j\)-th (\(j=1,\ldots,N_{\rm T}\)) columns of \(\mathbf{V}_{\rm R}\) and \(\mathbf{U}_{\rm T}\), respectively, \(\boldsymbol{\phi} \) is the vector containing the diagonal entries of \(\boldsymbol{\Phi}\), and \(d_{{\rm R},i}\) and \(d_{{\rm T}, j}\) denote the \(i\)-th and \(j\)-th diagonal entries of \(\mathbf{D}_{\rm R}\) and \(\mathbf{D}_{\rm T}\). 

\section{Channel Gain Maximization}\label{sec: Channel Gain Maximization}
We herein focus on the maximization of the \(1\)-bit RIS-parametrized channel gain in the cases of pure Line-of-Sight (LoS) and Ricean channel conditions, which is written as:
\begin{align}\label{eq: OP formulation original}
    \mathcal{OP}_1:\,&\max_{\boldsymbol{\phi}} || \mathbf{\tilde{H}}||^2_{\rm F} = || \mathbf{D}_{\rm R} \mathbf{V}_{\rm R}^{\rm H} \boldsymbol{\Phi} \mathbf{U}_{\rm T} \mathbf{D}_{\rm T}||^2_{\rm F} \nonumber\\
    &\text{s.t.} \,\,\, \boldsymbol{\phi}\in\{1,-1\}^{N_{\rm S}}.
\end{align}
The equality in the objective holds due to \(\mathbf{U}_{\rm R}\) and \(\mathbf{V}_{\rm T}\) being square unitary matrices, thus the latter do not affecting the Frobenius norm of the cascaded channel. By substituting \eqref{eq:SVD form}, \(\mathcal{OP}_1\)'s objective can be reformulated as follows:
\begin{equation}\label{eq: channel magnitude with phi}
    || \mathbf{\tilde{H}}||^2_{\rm F}=\sum_{i=1}^{N_{\rm R}} \sum_{j=1}^{N_{\rm T}} d^2_{{\rm R},i} d^2_{{\rm T},j} |\left( \mathbf{v}_{{\rm R},i}^{*} \odot \mathbf{u}_{{\rm T},j} \right)^{\rm T} \boldsymbol{\phi}|^2.
\end{equation}
By observing this formulation we can see that, if \(\boldsymbol{\phi}\) is configured to align the phases for one of the inner terms, it can individually maximize them. For example to maximize the \((i,j)\)-th term of the sum the optimal \(\boldsymbol{\phi}\) would be \(\boldsymbol{\phi}=\exp\left(\jmath\angle\left( \mathbf{v}_{{\rm R},i}^{*} \odot \mathbf{u}_{{\rm T},j} \right)^{*}\right)\).
However, this approach has the following issues: \textit{i}) the phase configuration \(\boldsymbol{\phi}\) is only one, while the number of terms that need to be optimized are \(N_{\rm R}N_{\rm T}\); and \textit{ii}) we consider the RIS to be restricted to \(1\)-bit configuration. To overcome the latter, we will employ the SA method \cite{stylianopoulos2024signalignment}, which is specifically derived for \(1\)-bit RISs and capitalizes on the fact that \(\boldsymbol{\phi}\) can optimally maximize either the amplitude of the real or imaginary part of \(\left( \mathbf{v}_{{\rm R},i}^{*} \odot \mathbf{u}_{{\rm T},j} \right)\), by aligning the signs of either its real or imaginary part. It can be easily seen that, regardless of the structure of \(\left( \mathbf{v}_{{\rm R},i}^{*} \odot \mathbf{u}_{{\rm T},j} \right)\), at least a quarter of the maximum, which is only attainable with a continuous phase shift, is guaranteed \cite[Corollary~1]{stylianopoulos2024signalignment}. These performance guarantees motivate us further towards using this algorithm; however, issue \(\textit{i}\)) persists. In the following we will study two channel conditions: pure LoS, and Ricean, drawing conclusions on the phase profile selection for each case and its asymptotic performance guarantees. 

\subsection{Pure LoS Conditions}

Under pure LoS conditions, \(\mathbf{H}_{\rm T}\) and \(\mathbf{H}_{\rm R}^{\rm H}\) can be written with respect to two steering vectors each; the one emanating (impinging) from (towards) the TX (RX) and the one impinging (reflected) on (from) the RIS, i.e., we can write:
\begin{equation}\label{eq: Pure LoS both sides}
    \mathbf{\tilde{H}} = \mathbf{a}_{\rm R}(\phi_{\rm R},\theta_{\rm R}) \mathbf{a}_{\rm S}^{\rm H}(\phi^{\rm O}_{\rm S},\theta^{\rm O}_{\rm S}) \mathbf{\Phi} \mathbf{a}_{\rm S}(\phi^{\rm I}_{\rm S},\theta^{\rm I}_{\rm S}) \mathbf{a}^{\rm H}_{\rm T}(\phi_{\rm T},\theta_{\rm T}),
\end{equation}
where \(\mathbf{a}_{\rm R}(\phi_{\rm R},\theta_{\rm R})\in \mathbb{C}^{N_{\rm R}\times 1},\, \mathbf{a}_{\rm T}(\phi_{\rm T}, \theta_{\rm T})\in \mathbb{C}^{N_{\rm T}\times 1}\), and \(\mathbf{a}_{\rm S}(\phi_{\rm S}, \theta_{\rm S})\in \mathbb{C}^{N_{\rm S}\times 1}\) denote the unnormalized steering vectors for UPAs, with the respective angles being the pair of elevation and azimuth angles of arrival from the TX \((\phi_{\rm T},\theta_{\rm T})\), departure to the RX \((\phi_{\rm R},\theta_{\rm R})\), and the impinging/outgoing angles corresponding to the RIS \((\phi^{\rm I}_{\rm S},\theta^{\rm I}_{\rm S})\)/\((\phi^{\rm O}_{\rm S},\theta^{\rm O}_{\rm S})\). 
It can be easily seen that both channels are rank-\(1\) matrices since they are created from the product of two vectors, meaning that all their singular values, except the first, are zero. To this end, \(\mathcal{OP}_1\)'s objective can be re-expressed with only the first summation term in \eqref{eq: channel magnitude with phi}.
Thereafter, it can be concluded that, by phase aligning only the element-wise product of the principal singular vectors, the channel gain is maximized. Considering the structure of the channel in \eqref{eq: Pure LoS both sides}, this is equivalent to the following maximization:
\begin{equation}\label{eq: pure LoS channel maximization with steering vecs}
    \max_{\boldsymbol{\phi}} \left|\left(\mathbf{a}^{*}_{\rm S}(\phi^{\rm O}_{\rm S},\theta^{\rm O}_{\rm S})\odot \mathbf{a}_{\rm S}(\phi^{\rm I}_{\rm S},\theta^{\rm I}_{\rm S})\right)^{\rm T}\boldsymbol{\phi}\right|^2.
\end{equation}
Letting \(\mathbf{b}_{\rm S}\) denote the inner product of the two steering vectors: \(\mathbf{b}_{\rm S}\triangleq \mathbf{a}^{*}_{\rm S}(\phi^{\rm O}_{\rm S},\theta^{\rm O}_{\rm S})\odot \mathbf{a}_{\rm S}(\phi^{\rm I}_{\rm S},\theta^{\rm I}_{\rm S})\), then the optimal continuous configuration would be \(\boldsymbol{\phi} = \exp\left(\jmath\angle\mathbf{b}_{\rm S}^{*} \right)\), while, for the \(1\)-bit case, we employ the SA method, which yields a gain of at least \(0.25 N^2_{\rm S}\) (with \(N^{2}_{\rm S}\) being the optimal for the continuous case). Consequently,
it holds that \(||\mathbf{\tilde{H}}||^2_{\rm F}\geq 0.25 N_{\rm R} N_{\rm T}N^2_{\rm S}\), following the general rule of \(N^2_{\rm S}\) scaling of the channel gain.

\subsection{Principal Singular Value Hardening in Ricean Conditions}

\begin{figure*}[!t]
    \centering
    \begin{subfigure}[b]{0.32\textwidth}
        \centering
        \includegraphics[width=\textwidth]{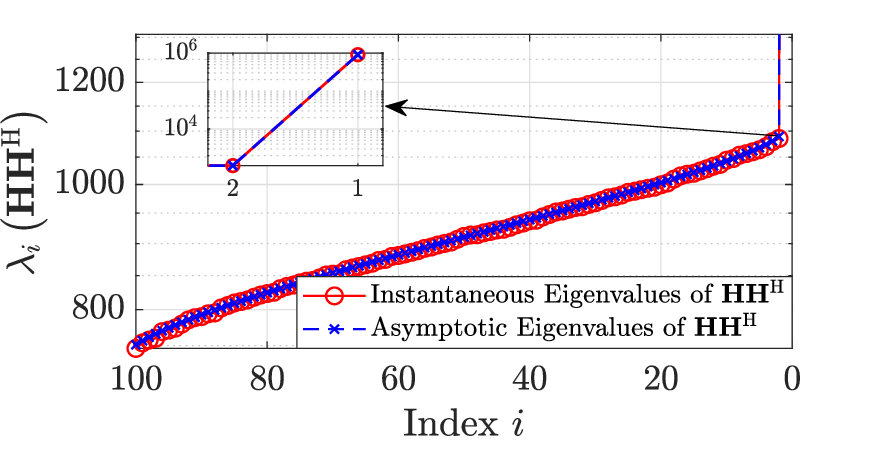} 
        \caption{}
        \label{fig: eigenvalues of single instance}
    \end{subfigure}
    \hfill
    \begin{subfigure}[b]{0.32\textwidth}
        \centering
        \includegraphics[width=\textwidth]{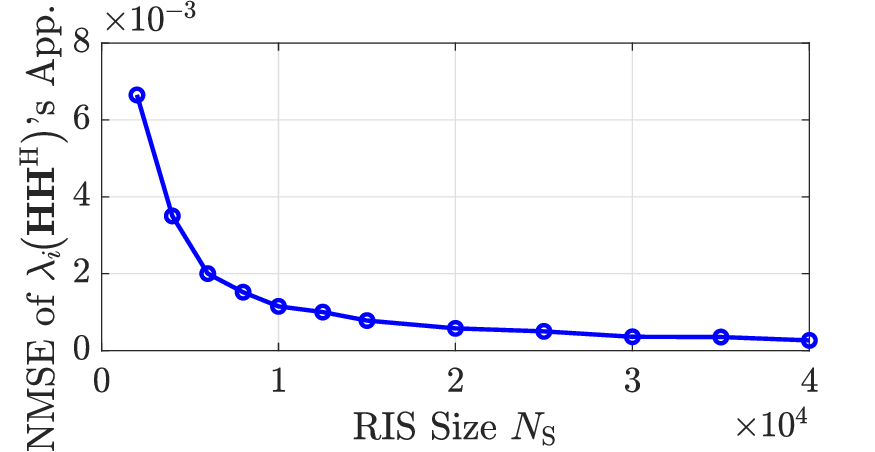} 
        \caption{}
        \label{fig: eigenvalues versus N}
    \end{subfigure}
    \begin{subfigure}[b]{0.32\textwidth}
        \centering
        \includegraphics[width=\textwidth]{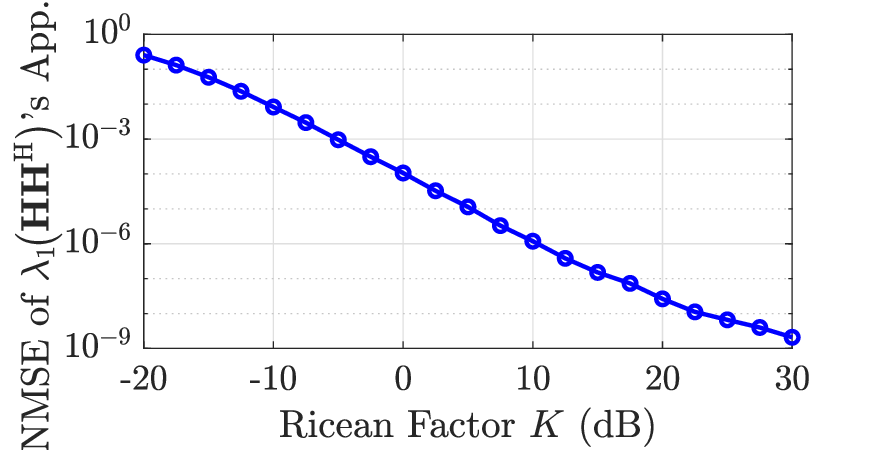} 
        \caption{}
        \label{fig: eigenvalues versus K}
    \end{subfigure}
    \caption{\small Convergence properties of a \(K\)-factor Ricean fading channel \(\mathbf{H} \in \mathbb{C}^{N_{\rm S} \times N_{\rm T}}\). Unless otherwise stated, the parameters \(N_{\rm S} = 10^4\), \(N_{\rm T} = 10^2\), and \(K = 10\,\mathrm{dB}\) were used. Figure~\ref{fig: eigenvalues of single instance} shows the eigenvalues of \(\mathbf{HH}^{\rm H}\) for a single channel instance which are compared with their respective asymptotics computed via \eqref{eq: convergence of all eigenvalues}; the Normalized Mean Squared Error (NMSE) of this asymptotic approximation is illustrated in Fig.~\ref{fig: eigenvalues versus N} for different sizes of RISs. In particular, the NMSE is computed for each eigenvalue and the aggregate error is depicted, revealing that asymptotic hardening takes place for increasing \(N_{\rm S}\). Figure~\ref{fig: eigenvalues versus K} depicts the NMSE between the principal eigenvalue of \(\mathbf{HH}^{\rm H}\) and \(K/(K+1)N_{\rm S}N_{\rm T}\), with the latter being the principal eigenvalue in the asymptotic regime according to Theorem~\ref{theorem: eigenvalue hardening}. The NMSE is plotted versus \(K\) starting with the value \(-20\)~dB where \(1/N_{\rm T}\) equals \(K\); recall that, at this point, convergence is lost, since it should hold that \(K/N_{\rm T}\to 0\). It can, however, be seen that, if \(K\geq 10N_{\rm T}^{-1}\), the NMSE drops below \(10^{-2}\) and converges to \(0\) for increasing \(K\).
}
    \label{fig: simulations for eigenvalue theorems}
\end{figure*}

We consider the case where both \(\mathbf{H}^{\rm H}_{\rm R}\) and \(\mathbf{H}_{\rm T}\) follow the Ricean fading distribution with respective factors \(K_{\rm R}\) and \(K_{\rm T}\). To analytically assess the channel gain performance, we focus on the asymptotic regime where \(N_{\rm T}, N_{\rm R} \to \infty\). Moreover, since RISs are envisioned to operate at much larger scales than conventional transceivers, we examine the practically relevant case where \(\frac{N_{\rm R}}{N_{\rm S}}, \frac{N_{\rm T}}{N_{\rm S}} \to 0\), i.e., the RIS is asymptotically larger than the TX/RX. As \(\mathbf{H}_{\rm T}\) and \(\mathbf{H}^{\rm H}_{\rm R}\) share the same distribution, we study the former for notational simplicity, noting that all results apply identically to \(\mathbf{H}^{\rm H}_{\rm R}\). We begin by analyzing the asymptotic convergence of \(\mathbf{H}_{\rm T}\)'s principal singular values and vectors as the RIS phase configuration that maximizes channel gain, depends primarily on those.

\begin{theorem}\label{theorem: eigenvalue hardening}
Consider \(\mathbf{H}_{\rm T}\) to be Ricean distributed as: \(\mathbf{H}_{\rm T} = \frac{\sqrt{K_{\rm T}}}{\sqrt{K_{\rm T}+1}}\mathbf{A} + \frac{1}{\sqrt{K_{\rm T}+1}}\mathbf{B}\). If \(N_{\rm T},N_{\rm S}\to \infty\) and \(\frac{N_{\rm T}}{N_{\rm S}}\to 0\), then \(\sigma_1(\mathbf{H}_{\rm T})\to\sigma_1\left(\frac{\sqrt{K_{\rm T}}}{\sqrt{K_{\rm T}+1}}\mathbf{A}\right)\) and \(\mathbf{u}_{{\rm T},1}\to 1/\sqrt{N_{\rm S}} \,\mathbf{a}_{\rm S}(\phi^{\rm I}_{\rm S},\theta^{\rm I}_{\rm S})  \).
\end{theorem}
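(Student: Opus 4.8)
The plan is to treat \(\mathbf{H}_{\rm T}\) as a rank-one ``signal-plus-noise'' (spiked) model and to separate the deterministic LoS spike from the random bulk. I would write \(\mathbf{H}_{\rm T}=\mathbf{S}+\mathbf{N}\) with \(\mathbf{S}\triangleq\frac{\sqrt{K_{\rm T}}}{\sqrt{K_{\rm T}+1}}\mathbf{A}\) and \(\mathbf{N}\triangleq\frac{1}{\sqrt{K_{\rm T}+1}}\mathbf{B}\), where the LoS term \(\mathbf{A}=\mathbf{a}_{\rm S}(\phi^{\rm I}_{\rm S},\theta^{\rm I}_{\rm S})\mathbf{a}^{\rm H}_{\rm T}(\phi_{\rm T},\theta_{\rm T})\) is rank one as in \eqref{eq: Pure LoS both sides}, and \(\mathbf{B}\) has i.i.d.\ \(\mathcal{CN}(0,1)\) entries (unit variance being forced by the normalization \(E[{\rm Tr}(\mathbf{H}_{\rm T}\mathbf{H}_{\rm T}^{\rm H})]=N_{\rm T}N_{\rm S}\)). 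Since the steering vectors have unit-modulus entries, \(\|\mathbf{a}_{\rm S}\|^2=N_{\rm S}\) and \(\|\mathbf{a}_{\rm T}\|^2=N_{\rm T}\), so \(\mathbf{S}\) has a single nonzero singular value \(\sigma_1(\mathbf{S})=\frac{\sqrt{K_{\rm T}}}{\sqrt{K_{\rm T}+1}}\sqrt{N_{\rm S}N_{\rm T}}\) with left singular vector exactly \(\mathbf{a}_{\rm S}/\sqrt{N_{\rm S}}\). The entire argument then reduces to showing that the perturbation \(\mathbf{N}\) is negligible against this spike.

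First I would control the noise through its operator norm \(\|\cdot\|_2\). By the Bai--Yin edge law for rectangular Gaussian matrices, \(\|\mathbf{B}\|_{2}=\sqrt{N_{\rm S}}+\sqrt{N_{\rm T}}+o(\sqrt{N_{\rm S}})\) almost surely, so under \(N_{\rm T}/N_{\rm S}\to0\) one has \(\|\mathbf{N}\|_{2}=\frac{1}{\sqrt{K_{\rm T}+1}}\|\mathbf{B}\|_{2}=O(\sqrt{N_{\rm S}})\). The decisive quantity is the signal-to-noise separation
\[
\frac{\|\mathbf{N}\|_{2}}{\sigma_1(\mathbf{S})}=O\!\left(\frac{\sqrt{N_{\rm S}}}{\sqrt{K_{\rm T}N_{\rm S}N_{\rm T}}}\right)=O\!\left(\frac{1}{\sqrt{K_{\rm T}N_{\rm T}}}\right),
\]
which vanishes precisely when \(K_{\rm T}N_{\rm T}\to\infty\); this is the (BBP-type) threshold that makes the result hold ``almost irrespective'' of \(K_{\rm T}\) and that pinpoints the breakdown observed in Fig.~\ref{fig: eigenvalues versus K} near \(K_{\rm T}N_{\rm T}=1\).

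Given this separation, the singular-value claim follows from Weyl's inequality, \(|\sigma_1(\mathbf{H}_{\rm T})-\sigma_1(\mathbf{S})|\le\|\mathbf{N}\|_{2}\), whence the relative error \(|\sigma_1(\mathbf{H}_{\rm T})/\sigma_1(\mathbf{S})-1|=O(1/\sqrt{K_{\rm T}N_{\rm T}})\to0\) (the convergence understood in this normalized sense, as both sides diverge like \(\sqrt{N_{\rm S}N_{\rm T}}\)). For the singular vector I would pass to the Gram matrix and apply the Davis--Kahan \(\sin\Theta\) theorem to \(\mathbf{H}_{\rm T}\mathbf{H}_{\rm T}^{\rm H}=\mathbf{S}\mathbf{S}^{\rm H}+\mathbf{E}\), with \(\mathbf{E}=\mathbf{S}\mathbf{N}^{\rm H}+\mathbf{N}\mathbf{S}^{\rm H}+\mathbf{N}\mathbf{N}^{\rm H}\). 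The top eigenvector of \(\mathbf{S}\mathbf{S}^{\rm H}\) is \(\mathbf{a}_{\rm S}/\sqrt{N_{\rm S}}\) and the unperturbed gap is \(\sigma_1^2(\mathbf{S})\). Using \(\|\mathbf{E}\|_2\le2\sigma_1(\mathbf{S})\|\mathbf{N}\|_2+\|\mathbf{N}\|_2^2\), the cross term dominates and yields
\[
\sin\angle\!\left(\mathbf{u}_{{\rm T},1},\tfrac{1}{\sqrt{N_{\rm S}}}\mathbf{a}_{\rm S}\right)\le\frac{\|\mathbf{E}\|_2}{\sigma_1^2(\mathbf{S})-\|\mathbf{E}\|_2}=O\!\left(\frac{1}{\sqrt{K_{\rm T}N_{\rm T}}}\right)\to0,
\]
which is the stated convergence of the principal left singular vector (up to the usual unimodular phase).

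The step I expect to be the main obstacle is making the eigenvector argument rigorous rather than merely order-of-magnitude. The delicate point is that \(\mathbf{N}\mathbf{N}^{\rm H}\) lifts the whole noise bulk up to \(\Theta(N_{\rm S}/(K_{\rm T}+1))\), so I must verify that the spiked eigenvalue \(\sigma_1^2(\mathbf{S})=\Theta(K_{\rm T}N_{\rm S}N_{\rm T}/(K_{\rm T}+1))\) stays strictly separated from this bulk edge; the effective gap is \(\Theta\!\big(\tfrac{N_{\rm S}}{K_{\rm T}+1}(K_{\rm T}N_{\rm T}-1)\big)\), so a valid application of Davis--Kahan again hinges on \(K_{\rm T}N_{\rm T}\to\infty\). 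A secondary care point is the mode of convergence: since \(\|\mathbf{B}\|_2\) concentrates almost surely by Bai--Yin, all the bounds above hold almost surely, and the identical argument applied to \(\mathbf{H}_{\rm R}^{\rm H}\) delivers the companion statement for the RX side.
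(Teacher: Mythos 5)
Your proposal is correct, and for the singular value it is essentially the paper's argument: the same signal-plus-noise split, the same observation that \(\sigma_1(\mathbf{S})=\frac{\sqrt{K_{\rm T}}}{\sqrt{K_{\rm T}+1}}\sqrt{N_{\rm S}N_{\rm T}}\), an almost-sure \(O(\sqrt{N_{\rm S}})\) bound on the noise operator norm (you cite Bai--Yin, the paper cites a Wishart largest-eigenvalue result giving \(\sqrt{2\sqrt{N_{\rm S}N_{\rm T}}+N_{\rm S}}\); these agree to leading order), and Weyl's inequality interpreted in the normalized sense. The singular-vector part is where you genuinely diverge: the paper evaluates the Rayleigh quotient \(\mathbf{a}_{\rm S,T}^{\rm H}\mathbf{H}_{\rm T}\mathbf{H}_{\rm T}^{\rm H}\mathbf{a}_{\rm S,T}\) and shows it asymptotically attains \(\sigma_1^2(\mathbf{A}_{\rm K})\), whereas you pass to the Gram matrix and apply Davis--Kahan with an explicit spectral-gap bound. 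Your route buys more: attaining the maximal Rayleigh quotient asymptotically implies closeness to the top eigenvector only when \(\lambda_1-\lambda_2\) is bounded away from the error, a gap the paper never verifies but which your \(\sigma_1^2(\mathbf{S})-\|\mathbf{E}\|_2\) denominator supplies explicitly. You also state the correct threshold, \(K_{\rm T}N_{\rm T}\to\infty\), which matches the paper's surrounding discussion (``\(K_{\rm x}\gg 1/N_{\rm x}\)'') and Fig.~\ref{fig: eigenvalues versus K}; the proof in the paper writes the ratio with \(\sqrt{K}\) in the numerator and concludes under ``\(K_{\rm T}/N_{\rm T}\to 0\),'' which appears to be a sign slip --- your derivation of \(O(1/\sqrt{K_{\rm T}N_{\rm T}})\) is the consistent version. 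The only cost of your approach is the extra bookkeeping around the bulk edge of \(\mathbf{N}\mathbf{N}^{\rm H}\), which you correctly flag and which your gap estimate already absorbs.
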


\begin{proof}
In \(\mathbf{H}_{\rm T}\)'s formulation, \(\mathbf{A} = \mathbf{a}_{\rm S}(\phi^{\rm I}_{\rm S},\theta^{\rm I}_{\rm S}) \mathbf{a}^{\rm H}_{\rm T}(\phi_{\rm T},\theta_{\rm T}) \) is a deterministic rank-\(1\) matrix modeling the LoS connection, while \(\mathbf{B}\) follows the Rayleigh distribution. To ease the notation, we define the auxiliary variables \(\mathbf{A}_{\rm K} \triangleq \frac{\sqrt{K}}{\sqrt{K+1}}\mathbf{A}\), \(\mathbf{B}_{\rm K} \triangleq \frac{1}{\sqrt{K+1}} \mathbf{B}\), and \(\mathbf{a}_{\rm S,T} \triangleq \mathbf{a}_{\rm S}(\phi^{\rm I}_{\rm S},\theta^{\rm I}_{\rm S})\). Let \(\sigma_i(\cdot)\) denote the \(i\)-th singular value of a matrix, then \(\sigma_1\left(\mathbf{A}_{\rm K}\right)=\frac{\sqrt{K_{\rm T}}}{\sqrt{K_{\rm T}+1}}\sqrt{N_{\rm S}N_{\rm T}}\). On the other hand, \(\mathbf{B}\mathbf{B}^{\rm H}\) is a Wishart matrix, and, since \(\frac{N_{\rm T}}{N_{\rm S}}\to 0\), according to \cite[Corollary 2.2.(c)]{DETTE2002largesteigenvalue} the largest eigenvalue of \(\mathbf{BB}^{\rm H}\) converges almost surely to \(\lambda_1(\mathbf{BB}^{\rm H})= 2 \sqrt{N_{\rm S}N_{\rm T}} + N_{\rm S}\), yielding \(\sigma_1(\mathbf{B})=\sqrt{2 \sqrt{N_{\rm S}N_{\rm T}} + N_{\rm S}}\). By applying Weyl's inequality, the following asymptotic result is deduced:
\begin{align}
\left|\sigma_1\left(\mathbf{H}_{\rm T}\right) -\sigma_{ 1}\left(\mathbf{A}_{\rm K} \right)\right|&\leq \sigma_1\left(\mathbf{B}_{\rm K}\right) \,\,\overset{\text{(a)}}{\Rightarrow} \\
\frac{\left|\sigma_1\left(\mathbf{H}_{\rm T}\right) - \sigma_{ 1}\left(\mathbf{A}_{\rm K}\right)\right|}{\sigma_1\left(\mathbf{A}_{\rm K}\right)} &\leq  0 \Rightarrow\,\, 
 \sigma_1(\mathbf{H}_{\rm T}) \to  \sigma_1(\mathbf{A}_{\rm K}).\nonumber
\end{align}
Herein, \(\text{(a)}\) holds at the asymptotic regime, since \(\lim_{N_{\rm S},N_{\rm T}\to \infty}\frac{\sigma_1\left(\mathbf{B}_{\rm K}\right)}{ \sigma_1\left(\mathbf{A}_{\rm K}\right)}= \lim_{N_{\rm S},N_{\rm T}\to \infty}  \frac{\sqrt{K}\sqrt{2 \sqrt{N_{\rm S}N_{\rm T}} + N_{\rm S}}}{\sqrt{N_{\rm S}N_{\rm T}}}=0\), which holds true under the mild assumption that \(\frac{K_{\rm T}}{N_{\rm T}}\to 0\). 

To prove that the principal (left-side) singular vector converges to the LoS steering vector, it suffices to show that the Rayleigh quotient of \(\mathbf{H}_{\rm T}\) with \(\mathbf{a}_{\rm S,T}\) equals \(\sigma_1^2(\mathbf{H}_{\rm T})\):
\begin{align}\label{eq: principal singular vector inequality}
        &\mathbf{a}_{\rm S,T}\mathbf{H}_{\rm T}\mathbf{H}^{\rm H}_{\rm T}\mathbf{a}^{\rm H}_{\rm S,T} =\nonumber\\
        &  \sigma_1^2\left(\mathbf{A}_{\rm K}\right) + \mathbf{a}^{\rm H}_{\rm S,T}\left( 2 {\rm Re}\left\{\mathbf{A}_{\rm K}\mathbf{B}_{\rm K}^{\rm H}\right\}+  \mathbf{B}_{\rm K}\mathbf{B}^{\rm H}_{\rm K}\right)\mathbf{a}_{\rm S,T}\leq \nonumber\\
        & \sigma_1^2\left(\mathbf{A}_{\rm K}\right) + \sigma_1^2\left(\mathbf{B}_{\rm K }\right) +  2 \sigma_1\left(\mathbf{B}_{\rm K}\right)\sigma_1(\mathbf{A}_{\rm K}).
    \end{align}
    In the asymptotic regime, \eqref{eq: principal singular vector inequality} converges to \(\sigma_1^2\left(\mathbf{A}_{\rm K}\right)\), since \(\lim_{N_{\rm T},N_{\rm S}\to \infty} \frac{\mathbf{a}_{\rm S,T}\mathbf{H}_{\rm T}\mathbf{H}^{\rm H}_{\rm T}\mathbf{a}^{\rm H}_{\rm S,T}}{\sigma_1^2(\mathbf{A}_{\rm K})} = 1 + \lim_{N_{\rm T},N_{\rm S}\to \infty} \frac{\mathbf{a}^{\rm H}_{\rm S,T}\left( 2 {\rm Re}\left\{\mathbf{A}_{\rm K}\mathbf{B}_{\rm K}^{\rm H}\right\}+  \mathbf{B}_{\rm K}\mathbf{B}^{\rm H}_{\rm K}\right)\mathbf{a}_{\rm S,T}}{\sigma_1^2(\mathbf{A}_{\rm K})}\), and using \eqref{eq: principal singular vector inequality}'s inequality, we conclude that \(\lim_{N_{\rm T},N_{\rm S}\to \infty} \frac{\mathbf{a}_{\rm S,T}\mathbf{H}_{\rm T}\mathbf{H}^{\rm H}_{\rm T}\mathbf{a}^{\rm H}_{\rm S,T}}{\sigma_1^2(\mathbf{A}_{\rm K})} = 1\), since \(\lim_{N_{\rm T},N_{\rm S}\to \infty} \frac{\mathbf{a}^{\rm H}_{\rm S,T}\left( 2 {\rm Re}\left\{\mathbf{A}_{\rm K}\mathbf{B}_{\rm K}^{\rm H}\right\}+  \mathbf{B}_{\rm K}\mathbf{B}^{\rm H}_{\rm K}\right)\mathbf{a}_{\rm S,T}}{\sigma_1^2(\mathbf{A}_{\rm K})}\leq \lim_{N_{\rm T},N_{\rm S}\to \infty} \frac{\sigma_1^2(\mathbf{B}_{\rm K}) + 2 \sigma_1(\mathbf{A}_{\rm K})\sigma_1(\mathbf{B}_{\rm K})}{\sigma_1^2(\mathbf{A}_{\rm K})} = 0\) when \(\frac{K_{\rm T}}{N_{\rm T}}\to 0\).
    
\end{proof}
Theorem~\ref{theorem: eigenvalue hardening} establishes an asymptotic hardening effect in Ricean fading channels, according to which the principal singular value and vector converge to those of the deterministic LoS component and its steering vector. 
This convergence is largely insensitive to the Ricean factor, provided that \(K_{\rm x}\gg 1/N_{\rm x}\), which is expected in the asymptotic regime for all but purely Rayleigh channels (i.e., when \(K_{\rm x}=0\)).

\subsection{Convergence of SVD Components in Ricean Conditions}
In this subsection, we aim to characterize the asymptotic behavior of the SVD components of \(\mathbf{H}_{\rm T}\) and \(\mathbf{H}^{\rm H}_{\rm R}\). 
Our goal is twofold: \textit{i)} to determine how the singular values beyond the dominant one converge; and \textit{ii)} to describe the distribution of the corresponding singular vectors. 
This analysis is essential for the RIS configuration optimization problem \(\mathcal{OP}_1\), since the achievable channel gain depends directly on these singular values and vectors. 
By understanding their limiting behavior, we can identify the asymptotically optimal RIS configuration strategy, and derive instantaneous performance guarantees.

We have previously demonstrated that the principal eigenvector of \(\mathbf{H}_{\rm T}\mathbf{H}^{\rm H}_{\rm T}\) (similarly for \(\mathbf{H}^{\rm H}_{\rm R}\mathbf{H}_{\rm R}\)) converges asymptotically to the eigenvector associated with the deterministic LoS component \(\mathbf{A}\mathbf{A}^{\rm H}\). Since \(\mathbf{A}\mathbf{A}^{\rm H}\) is a rank-\(1\) matrix, it contributes a single nonzero eigenvalue and its corresponding eigenvector, while all other orthogonal directions lie in its null space. Thus, asymptotically, the influence of \(\mathbf{A}\) on the rest of the eigenspectrum of \(\mathbf{H}_{\rm T}\mathbf{H}^{\rm H}_{\rm T}\) becomes negligible. Intuitively, this is because the dominant eigenvalue generated by \(\mathbf{A}\mathbf{A}^{\rm H}\) separates from the spectral bulk of the random component \(\mathbf{B}\mathbf{B}^{\rm H}\), leading to an effective decoupling of the corresponding eigenspaces. This behavior is consistent with results in random matrix theory, where it is known that the addition of a low-rank deterministic matrix to a Wishart-type random matrix preserves the empirical eigenvalue distribution of the random part, except for a finite number of outliers corresponding to the perturbation's rank~\cite{tao2012topics}. For our setting, however, we have the sum of non-Hermitian matrices (\(e.g., \mathbf{H}_{\rm T} = \mathbf{A}_{\rm K} + \mathbf{B}_{\rm K}\)), so we focus on the SVD, and similar separation results apply. In particular,~\cite[Theorem~2.8]{benaych2011singular} shows that the singular values and vectors of a low-rank deterministic perturbation detach from the bulk, which remains asymptotically governed by the random matrix \(\mathbf{B}\).
This becomes apparent from the inequality in~\cite[p.~11]{benaych2011singular}, according to which the singular values of \(\mathbf{H}_{\rm T}\) beyond the principal one are bounded between successive singular values of \(\mathbf{B}\), i.e., \(\sigma_{i+1}\left(\mathbf{B}\right) \leq \sigma_i\left(\mathbf{H}_{\rm T}\right)\leq \sigma_{i-1}\left(\mathbf{B}\right)\).

Based on the above, we model the rest of the SVD components of \(\mathbf{H}_{\rm T}\) (and \(\mathbf{H}^{\rm H}_{\rm R}\)) as identical in distribution to those of \(\mathbf{B}_{\rm K}\). Specifically, we are interested on the distributions of \(\mathbf{D}_{\rm T}\) and \(\mathbf{U}_{\rm T}\), which can be investigated by studying the eigenspectrum of \(\mathbf{B}_{\rm K}\mathbf{B}^{\rm H}_{\rm K} = \mathbf{U}_{\rm B}\mathbf{D}_{\rm B}^2\mathbf{U}^{\rm H}_{\rm B}\). Regarding the eigenvalue distribution of \(\mathbf{B}_{\rm K}\mathbf{B}^{\rm H}_{\rm K}\), we know that, for the case where \(N_{\rm S},N_{\rm T}\to \infty\) with \(N_{\rm T}/N_{\rm S}\to 0\) \cite[Corollary~2.2. (c)]{DETTE2002largesteigenvalue}, each eigenvalue converges to a non-random limit, which can be found via the roots of the Laguerre polynomial \(L_{N_{\rm S}}^{N_{\rm T}-N_{\rm S}}\left(2 \sqrt{N_{\rm S}N_{\rm T}} x + N_{\rm S} \right)\) \footnote{While \cite{DETTE2002largesteigenvalue} studied real Wishart matrices, it is well established that the limiting empirical eigenvalue distribution of both real and complex Wishart matrices converges to the same Marčenko–Pastur law.}. Exploiting this fact and Theorem~\ref{theorem: eigenvalue hardening}, one can compute the convergence of the singular values of \(\mathbf{H}_{\rm T}\) and \(\mathbf{H}^{\rm H}_{\rm R}\) as follows (assuming \(K_{\rm x}\gg 1/N_{\rm x}\)):
\begin{equation}
    d^2_{{\rm x},i}\to \begin{cases}
        \frac{K_{\rm x}}{K_{\rm x} + 1}N_{\rm S}N_{\rm x} & i=1 \label{eq: convergence of all eigenvalues}\\
       \frac{N_{\rm S} + 2 r\left[N_{\rm S}-i + 1\right]\sqrt{N_{\rm S}N_{\rm x}}}{K_{\rm x}+1} & i=2,\ldots,N_{\rm x}
    \end{cases},
\end{equation}
where \(r[1]<r[2]<\ldots<r[N_{\rm S}]\) denote the ordered roots of \(L_{N_{\rm S}}^{N_{\rm x}-N_{\rm S}}\left(2 \sqrt{N_{\rm S}N_{\rm x}} x + N_{\rm S} \right)\). Since there are \(N_{\rm x}\) eigenvalues, we are concerned only with the roots beyond the one at \(\left[N_{\rm S}-N_{\rm x}\right]\); these roots lie in \([-1,1]\). It is noted that, for \(K\ll 1/N_{\rm x}\), the singular values of the Ricean channel converge to those of its random counterpart. Theorem \ref{theorem: eigenvalue hardening} and the convergence of \eqref{eq: convergence of all eigenvalues} are corroborated in Fig.~\ref{fig: simulations for eigenvalue theorems}.

As for the eigenvectors of \(\mathbf{B}_{\rm K}\mathbf{B}^{\rm H }_{\rm K}\), since it is a Wishart matrix, its eigenspace lies uniformly in the set of unitary matrices (Haar measure \cite{tao2012topics}). Furthermore, it is known that, at each channel instance, \(\mathbf{B}_{\rm K}\mathbf{B}^{\rm H}_{\rm K}\) is of rank \(N_{\rm T}\), thus, from the set of \(N_{\rm S}\times N_{\rm S}\)-sized unitary matrices we are only concerned with a subpart of \(N_{\rm S}\times N_{\rm T}\) size. Let us define a matrix \(\mathbf{\Gamma}\in \mathbb{C}^{N_{\rm S}\times N_{\rm S}}\) which follows the Haar measure. Then, following \cite[Theorem~3]{jiang2006orthogonalmatrix}, its \(N_{\rm S}\times \frac{N_{\rm S}}{\log(N_{\rm S})}\) sized submatrix can be approximated by independent and identically distributed (i.i.d.) Gaussian normals. Furthermore, since we focus in the regime \(\frac{N_{\rm T}}{N_{\rm S}}\to 0\), \(N_{\rm T}<N_{\rm S}/\log(N_{\rm S})\) holds. Hence, we can assume that the elements of $\mathbf{U}_{\rm T}$, which corresponds to the $N_{\rm S} \times N_{\rm T}$ submatrix of a Haar-distributed matrix $\mathbf{\Gamma}$, can be modeled as i.i.d. Gaussian normals. Similarly, the same holds for $\mathbf{V}^{\rm H}_{\rm R}$; therefore, we can write $\forall i,j \geq 2$:
\begin{equation}\label{eq: Eiganvector Distribution}
    \mathbf{u}_{{\rm T},j}\,\,\text{and}\,\, \mathbf{v}^*_{{\rm R},i}\,\sim\mathcal{CN}(\mathbf{0}_{N_{\rm S}\times 1}, {1}/{\sqrt{N_{\rm S}}}\mathbf{I}_{N_{\rm S}}). 
\end{equation}
Based on these, we show in the following theorem that, when the RIS is configured to maximize one of the entries of  \(\mathbf{V}^{\rm H}_{\rm R}\mathbf{\Phi} \mathbf{U}_{\rm T}\), the remaining terms are asymptotically nulled. 

\begin{theorem}\label{theorem: channel magnitude convergence}
    Consider \(\mathbf{H}^{\rm H}_{\rm R}\) and \(\mathbf{H}_{\rm T}\) to be Ricean distributed and the 1-bit RIS to be optimized via SA to maximize \(|\left( \mathbf{v}_{{\rm R},i^{\star}}^{*} \odot \mathbf{u}_{{\rm T},j^{\star}} \right)^{\rm T} \boldsymbol{\phi}|\). If \(N_{\rm R}, N_{\rm T},N_{\rm S}\to \infty\) and \(\frac{N_{\rm T}}{N_{\rm S}},\frac{N_{\rm R}}{N_{\rm S}}\to 0\), then, \({|\left( \mathbf{v}_{{\rm R},i}^{*} \odot \mathbf{u}_{{\rm T},j} \right)^{\rm T} \boldsymbol{\phi}|^2} \to 0\,\,\forall i,j\,-\{i^{\star},j^{\star}\}\).
\end{theorem}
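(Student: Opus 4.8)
My plan is to prove the stronger claim that the quantity vanishes in mean square, $E\big[|(\mathbf{v}_{{\rm R},i}^{*}\odot\mathbf{u}_{{\rm T},j})^{\rm T}\boldsymbol{\phi}|^2\big]\to 0$; since the quantity is nonnegative, this yields convergence to $0$ in probability (and in $L^1$) by Markov's inequality. Writing $\mathbf{w}_{ij}\triangleq\mathbf{v}_{{\rm R},i}^{*}\odot\mathbf{u}_{{\rm T},j}$, I first record the structural ingredients already in hand: the SA profile $\boldsymbol{\phi}$ is a deterministic, measurable function of $\mathbf{w}_{i^{\star}j^{\star}}$ \emph{only}, with unit-modulus entries $|\phi_n|=1$; by \eqref{eq: Eiganvector Distribution}, in the regime $N_{\rm T}/N_{\rm S},N_{\rm R}/N_{\rm S}\to0$ the non-dominant singular vectors $\{\mathbf{u}_{{\rm T},j}\}_{j\geq2}$, $\{\mathbf{v}_{{\rm R},i}\}_{i\geq2}$ behave as mutually independent, circularly symmetric, isotropic unit vectors with $E[\mathbf{v}_{{\rm R},i}\mathbf{v}_{{\rm R},i}^{\rm H}]=N_{\rm S}^{-1}\mathbf{I}_{N_{\rm S}}$; the dominant ones converge to the deterministic LoS steering vectors (Theorem~\ref{theorem: eigenvalue hardening}); and, as $\mathbf{H}_{\rm T}$ and $\mathbf{H}_{\rm R}$ are independent, vectors drawn from the two channels are independent.

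The core step is an isotropy-conditioning argument. Since $(i,j)\neq(i^{\star},j^{\star})$, at least one index differs; say $i\neq i^{\star}$ (the case $j\neq j^{\star}$ is symmetric, interchanging the two sides). If moreover $\mathbf{v}_{{\rm R},i}$ is non-dominant, it is independent of both $\mathbf{u}_{{\rm T},j}$ (distinct channel) and $\boldsymbol{\phi}$ (which depends only on $\mathbf{v}_{{\rm R},i^{\star}},\mathbf{u}_{{\rm T},j^{\star}}$, and $\mathbf{v}_{{\rm R},i}$ is independent of $\mathbf{v}_{{\rm R},i^{\star}}$). Rewriting $\mathbf{w}_{ij}^{\rm T}\boldsymbol{\phi}=\mathbf{v}_{{\rm R},i}^{\rm H}\mathbf{c}$ with $\mathbf{c}\triangleq\mathbf{u}_{{\rm T},j}\odot\boldsymbol{\phi}$ and conditioning on $\mathbf{c}$, isotropy yields
\begin{equation}
E\big[|\mathbf{v}_{{\rm R},i}^{\rm H}\mathbf{c}|^2\,\big|\,\mathbf{c}\big]=\mathbf{c}^{\rm H}E\big[\mathbf{v}_{{\rm R},i}\mathbf{v}_{{\rm R},i}^{\rm H}\big]\mathbf{c}=\tfrac{1}{N_{\rm S}}\|\mathbf{c}\|^2=\tfrac{1}{N_{\rm S}}\|\mathbf{u}_{{\rm T},j}\|^2=\tfrac{1}{N_{\rm S}},
\end{equation}
where the last equalities use $|\phi_n|=1$ (so the Hadamard product preserves the norm) and $\|\mathbf{u}_{{\rm T},j}\|=1$. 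De-conditioning gives $E[|\mathbf{w}_{ij}^{\rm T}\boldsymbol{\phi}|^2]=N_{\rm S}^{-1}\to0$. The virtue of conditioning on a \emph{free} isotropic vector is that the intricate, globally coupled dependence of $\boldsymbol{\phi}$ on $\mathbf{w}_{i^{\star}j^{\star}}$ never enters---only $|\phi_n|=1$ is used. For the gain-maximizing choice $(i^{\star},j^{\star})=(1,1)$ every remaining pair has an index $\geq2$, hence a free isotropic singular vector, so this single argument disposes of all terms.

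The only residual pairs are those whose differing-index vectors are all dominant (deterministic), which can arise solely when the optimized pair is itself non-dominant; here I would fall back on the second-moment expansion $E[|\mathbf{w}_{ij}^{\rm T}\boldsymbol{\phi}|^2]=\sum_{n,m}E[w_{ij,n}w_{ij,m}^{*}\phi_n\phi_m]$. The diagonal contribution is $\sum_n E[|w_{ij,n}|^2]=N_{\rm S}\cdot O(N_{\rm S}^{-2})=O(N_{\rm S}^{-1})\to0$, using $|\phi_n|^2=1$, channel independence, and the per-entry scaling of the steering/isotropic vectors. Each off-diagonal term vanishes by a single-coordinate sign-flip symmetry: flipping the $n$-th coordinate of the independent, circularly symmetric starred vector that feeds $\phi_n$ but does not appear in $\mathbf{w}_{ij}$ negates $\phi_n$, while leaving $\phi_m$, the factors $w_{ij,n},w_{ij,m}$, and---crucially---the absolute-value-based real/imaginary SA selection unchanged; since this flip preserves the joint law, $E[w_{ij,n}w_{ij,m}^{*}\phi_n\phi_m]=-E[w_{ij,n}w_{ij,m}^{*}\phi_n\phi_m]=0$.

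I expect the principal obstacle to be exactly this coupling between the optimized profile $\boldsymbol{\phi}$ and the singular vectors populating the other entries: because SA selects \emph{globally} between real- and imaginary-part alignment, $\boldsymbol{\phi}$ is a complicated function of the entire vector $\mathbf{w}_{i^{\star}j^{\star}}$, so a naive coordinatewise factorization of the second moment is not licensed. Isolating an independent isotropic vector (collapsing the estimate to $\|\mathbf{c}\|^2/N_{\rm S}$), and otherwise invoking the sign-flip symmetry, are the two devices that circumvent it; note that the unit-modulus $1$-bit constraint $|\phi_n|=1$ is precisely what keeps $\|\mathbf{c}\|=\|\mathbf{u}_{{\rm T},j}\|$ under control. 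A secondary caveat is that the isotropy and cross-column independence of the non-dominant singular vectors hold only asymptotically (via the Haar/Gaussian-submatrix approximation licensed by $N_{\rm x}<N_{\rm S}/\log N_{\rm S}$), so the displayed identities are to be read in the limit.
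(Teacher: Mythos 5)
Your proposal is correct within the paper's modeling framework and reaches the same conclusion by a genuinely different technical route. The paper categorizes the entries of $\mathbf{V}_{\rm R}^{\rm H}\boldsymbol{\Phi}\mathbf{U}_{\rm T}$ into LoS/Gaussian/mixed types, \emph{asserts} that the SA-optimized $\boldsymbol{\phi}$ can be regarded as independent of the unoptimized terms with equiprobable $\pm 1$ entries, and then invokes Lyapunov's CLT on the resulting zero-mean sum to obtain the $O(1/\sqrt{N_{\rm S}})$ decay. You instead bound the second moment directly and invoke Markov: the isotropy-conditioning identity $E[|\mathbf{v}^{\rm H}\mathbf{c}|^2\,|\,\mathbf{c}]=\|\mathbf{c}\|^2/N_{\rm S}$ (together with $|\phi_n|=1$ preserving $\|\mathbf{c}\|$) disposes of every pair containing a free non-dominant singular vector without ever needing the independence of $\boldsymbol{\phi}$ from $\mathbf{w}_{i^{\star}j^{\star}}$, and the single-coordinate sign-flip symmetry kills the off-diagonal correlations in the residual cases where that independence assertion is actually false. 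This is the main thing your approach buys: it confronts, rather than assumes away, the coupling between the globally selected SA profile and the singular vectors, and your observation that the real/imaginary SA selection is invariant under $b_n\to-b_n$ is exactly the point the paper glosses over. Both arguments rest on the same (approximate) premises --- cross-column independence and isotropy of the non-dominant singular vectors via the Haar/Gaussian-submatrix model, which you correctly flag, and you also silently repair the paper's covariance normalization in \eqref{eq: Eiganvector Distribution} to $N_{\rm S}^{-1}\mathbf{I}$. The only piece of the paper's proof you omit is the closing comparison showing the optimized entry satisfies $|(\mathbf{v}_{{\rm R},i^{\star}}^{*}\odot\mathbf{u}_{{\rm T},j^{\star}})^{\rm T}\boldsymbol{\phi}|\geq 0.5$ and hence dominates; this is not needed for the theorem as literally stated, but it is what gives the result its operational meaning, so it would be worth appending one sentence citing the SA guarantees.
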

\begin{proof}
    Based on the distributions of the vectors in the element-wise product, the entries of $\mathbf{V}^{\rm H}_{\rm R}\mathbf{\Phi} \mathbf{U}_{\rm T}$ fall into the following three categories: \textit{i)} both are deterministic LoS components; \textit{ii)} both have independent complex Gaussian entries; and \textit{iii)} one is deterministic and the other Gaussian. Since we have established the independence among the singular vectors, we can assume that when the RIS is configured to optimize the $(i^{\star},j^{\star})$-th term, this configuration can be regarded as independent of the remaining terms. Moreover, both the LoS components and the Gaussian entries can be thought of as having a uniform phase over \([0,2\pi]\), thus, we can confidently assume that each RIS element has an equiprobable chance of having a \(-1\) or \(1\) response. For case \textit{i)}, it holds \(\mathbf{v}^{*}_{{\rm R},1}\odot \mathbf{u}_{{\rm T,1}}=\frac{1}{\sqrt{N_{\rm S}}} \mathbf{a}^{*}_{\rm S}\left(\phi^{\rm O}_{\rm S},\theta^{\rm O}_{\rm S}\right)\odot \frac{1}{\sqrt{N_{\rm S}}}\mathbf{a}_{\rm S}\left(\phi^{\rm I}_{\rm S},\theta^{\rm I}_{\rm S}\right)\). According to Lyapunov's Central Limit Theorem (CLT) \cite{billingsley1995probability}, we have for the weighted sum of i.i.d. random variables:
    \begin{align}
        &\left( \mathbf{v}_{{\rm R},i}^{*} \odot \mathbf{u}_{{\rm T},j} \right)^{\rm T} \boldsymbol{\phi} = \frac{1}{N_{\rm S}}\sum_{n=1}^{N_{\rm S}} [\mathbf{b}_{\rm S}]_n[\boldsymbol{\phi}]_n  = \nonumber\\
        & \frac{1}{\sum_{n=1}^{N_{\rm S}} |[\mathbf{b}_{\rm S}]_n|^2}\sum_{n=1}^{N_{\rm S}} [\mathbf{b}_{\rm S}]_n[\boldsymbol{\phi}]_n- E\left[ [\mathbf{b}_{\rm S}]_n[\boldsymbol{\phi}]_n\right] \to 0,
    \end{align}
    where the convergence is done with a rate of \(O\left(1/\sqrt{N_{\rm S}}\right)\).
    Similarly, we can prove that the other cases also converge to zero using the typical CLT theorem since the entries of the sum will be i.i.d. with zero mean. On the other hand, for the \((i^{\star},j^{\star})\)-th entry that is optimized via SA, it can be shown that, according to \cite[Theorems 2 and 3]{stylianopoulos2024signalignment}, \(|\left( \mathbf{v}_{{\rm R},i^{\star}}^{*} \odot \mathbf{u}_{{\rm T},j^{\star}} \right)^{\rm T} \boldsymbol{\phi}|\geq 0.5\) holds. Hence, the optimized term is asymptotically larger than the unoptimized ones, which concludes the proof.
\end{proof}

Following Theorem~\ref{theorem: channel magnitude convergence}, it can be seen that, optimizing one of the terms in \eqref{eq: channel magnitude with phi}, leads to the nulling of the rest in the asymptotic regime. Therefore, it can be deduced that the best choice would be to use all \(N_{\rm S}\) elements to perform SA on the first term: \(d^2_{{\rm R},1} d^2_{{\rm T},1} |\left( \mathbf{v}_{{\rm R},1}^{*} \odot \mathbf{u}_{{\rm T},1} \right)^{\rm T} \boldsymbol{\phi}|^2\), since, according to \eqref{eq: convergence of all eigenvalues}, \(d^2_{{\rm R},1} d^2_{{\rm T},1}\gg d^2_{{\rm R},i} d^2_{{\rm T},j}\) \(\forall i,j\,-\{1,1\}\). This choice yields the asymptotic channel gain \(||\mathbf{\tilde{H}}||^2_{\rm F}\to \alpha \frac{K_{\rm T}K_{\rm R}}{(1+K_{\rm T})(1+ K_{\rm R})} N^2_{\rm S}N_{\rm T}N_{\rm R} + \mathcal{O}(N_{\rm T}N_{\rm R})\) with \(\alpha \geq 0.25\) according to Theorems~\ref{theorem: eigenvalue hardening} and \ref{theorem: channel magnitude convergence}. Thus, with only the knowledge of the LoS components, the RIS elements can be configured to provide asymptotically optimal results in terms of channel gain, even in scenarios with low Ricean values, as long as it holds that \(K_{\rm R}\gg 1/N_{\rm R}\) and \(K_{\rm T}\gg 1/N_{\rm T}\).

\section{Capacity Optimization} \label{sec: Capacity Optimization}
Assuming Gaussian signaling with identity precoding (no CSI at the TX), that is, $\mathbf{x} \sim \mathcal{CN}(\mathbf{0}_{N_{\rm T}\times 1}, \frac{P_{\rm T}}{N_{\rm T}}\mathbf{I}_{N_{\rm T}})$, the capacity of the RIS-parametrized MIMO channel is computed as follows:
\begin{equation}\label{eq: capacity formula}
    \mathcal{C}(\boldsymbol{\phi})=\log_2 {\rm det}\left(\mathbf{I} + \frac{{\rm SNR}}{N_{\rm T}} \mathbf{\tilde{H}\tilde{H}}^{\rm H}\right),
\end{equation}
where \({\rm SNR}\triangleq {P_{\rm T}{\rm P_L}}/{\sigma^2}\). It is well known that this metric is maximized when \(\mathbf{\tilde{H}\tilde{H}}^{\rm H}\) is diagonal. In conventional MIMO systems, this diagonalization is performed via SVD precoding/combining and power allocation based on the waterfilling principle. For this paper's context, it will be next shown that, without precoding, similar diagonalization and power allocation can be achieved OTA through appropriate RIS configuration. To this end, \(\mathcal{C}(\boldsymbol{\phi})\) can be rewritten using Sylvester’s determinant identity after substituting \(\mathbf{{H}}^{\rm H}_{\rm R}\) and \(\mathbf{H}_{\rm T}\) with their SVDs, as follows:
\begin{align}
    \mathcal{C}(\boldsymbol{\phi})=\log_2 {\rm det}\biggl(& \mathbf{I}_{N_{\rm R}} + \frac{{\rm SNR}}{N_{\rm T}}  \mathbf{D}_{\rm R}  \mathbf{V}_{\rm R}^{\rm H} \boldsymbol{\Phi} \mathbf{U}_{\rm T}  \mathbf{D}_{\rm T}   \nonumber\\
    &\times\left(\mathbf{D}_{\rm R}  \mathbf{V}_{\rm R}^{\rm H} \boldsymbol{\Phi} \mathbf{U}_{\rm T}  \mathbf{D}_{\rm T} \right)^{\rm H}\biggl) \label{eq: capacity in SVD form}.
\end{align}
As per Theorem~\ref{theorem: channel magnitude convergence}, when the RIS phase vector \(\boldsymbol{\phi}\) is configured to maximize a specific entry of \(\mathbf{V}_{\rm R}^{\rm H}\boldsymbol{\Phi}\mathbf{U}_{\rm T}\), all remaining entries asymptotically converge to zero. Hence, we can use the RIS to \textit{asymptotically diagonalize the channel} by partitioning its response-tunable elements to maximize the diagonal entries of the effective channel matrix \(\mathbf{H}_{\rm eff}\triangleq\mathbf{D}_{\rm R}  \mathbf{V}_{\rm R}^{\rm H} \boldsymbol{\Phi} \mathbf{U}_{\rm T}  \mathbf{D}_{\rm T}\). Consequently, letting \(N_{\min}\triangleq\min\{N_{\rm T},N_{\rm R}\}\) and \(N_{\max}\triangleq\max\{N_{\rm T},N_{\rm R}\}\), we present in the following theorem the conditions for the RIS to perform OTA diagonalization, i.e., for \(\mathbf{H}_{\rm eff}\) to converge to a diagonal matrix. It is noted that, tn the case of non-square matrices, the term \textit{diagonal} implies that only the first $N_{\min}$ main diagonal entries are nonzero.

\begin{theorem}\label{theorem: channel diagonalization}
    Consider \(\mathbf{H}^{\rm H}_{\rm R}\) and \(\mathbf{H}_{\rm T}\) to be Ricean distributed and the 1-bit RIS to be optimized via SA to maximize the diagonal terms of \(\mathbf{H}_{\rm eff}\). Then, the singular values of \(\mathbf{H}_{\rm eff}\) (and \(\mathbf{\tilde{H}}\)) converge to the absolute diagonal values of \(\mathbf{H}_{\rm eff}\) in the regime where \(N_{\min},N_{\max},N_{\rm S}\to \infty\) with \(\frac{N_{\min} \sqrt{N_{\max}}}{\sqrt{N_{\rm S}}}\to 0\).
\end{theorem}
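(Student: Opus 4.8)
The plan is to separate \(\mathbf{H}_{\rm eff}\) into its diagonal and off-diagonal parts, \(\mathbf{H}_{\rm eff} = \boldsymbol{\Lambda} + \mathbf{E}\), and invoke Weyl's inequality for singular values, \(|\sigma_k(\mathbf{H}_{\rm eff}) - \sigma_k(\boldsymbol{\Lambda})| \le \|\mathbf{E}\|_2\), where \(\sigma_k(\boldsymbol{\Lambda})\) is exactly the \(k\)-th largest \(|[\mathbf{H}_{\rm eff}]_{k,k}|\) because \(\boldsymbol{\Lambda}\) is diagonal. The theorem then reduces to showing that \(\|\mathbf{E}\|_2\) is asymptotically negligible relative to every diagonal magnitude \(|[\mathbf{H}_{\rm eff}]_{k,k}|\), so that the relative perturbation of each singular value vanishes. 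Since \(\mathbf{U}_{\rm R}\) and \(\mathbf{V}_{\rm T}\) are unitary, the same limit transfers from \(\mathbf{H}_{\rm eff}\) to \(\mathbf{\tilde H}\).

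First I would quantify the entries through the factorization \(\mathbf{E} = \mathbf{D}_{\rm R}\mathbf{M}_{\rm off}\mathbf{D}_{\rm T}\), with \(\mathbf{M} \triangleq \mathbf{V}_{\rm R}^{\rm H}\boldsymbol{\Phi}\mathbf{U}_{\rm T}\) and \(\mathbf{M}_{\rm off}\) its off-diagonal part. The CLT argument of Theorem~\ref{theorem: channel magnitude convergence} gives \(|[\mathbf{M}]_{i,j}| = O(1/\sqrt{N_{\rm S}})\) for \(i\neq j\), while each diagonal entry is produced by SA over the group of \(\Theta(N_{\rm S}/N_{\min})\) elements allocated to it, yielding a coherent value \(|[\mathbf{M}]_{k,k}| = \Theta(1/N_{\min})\) via the \(\ge 0.5\) guarantee of~\cite{stylianopoulos2024signalignment} applied to a subset of size \(N_{\rm S}/N_{\min}\). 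Combining this with the singular-value scalings in~\eqref{eq: convergence of all eigenvalues}, namely \(d_{{\rm x},1} = \Theta(\sqrt{N_{\rm S}N_{\rm x}})\) and \(d_{{\rm x},i} = \Theta(\sqrt{N_{\rm S}})\) for \(i\ge 2\), the diagonal magnitudes of \(\mathbf{H}_{\rm eff}\) become \(\Theta(N_{\rm S}\sqrt{N_{\rm R}N_{\rm T}}/N_{\min})\) for \(k=1\) and \(\Theta(N_{\rm S}/N_{\min})\) for \(k\ge 2\); the latter, being the smallest, sets the binding constraint.

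The core step is bounding \(\|\mathbf{E}\|_2\). A direct submultiplicative bound \(\|\mathbf{E}\|_2 \le \sigma_1(\mathbf{D}_{\rm R})\,\|\mathbf{M}_{\rm off}\|_2\,\sigma_1(\mathbf{D}_{\rm T})\) is far too loose, because the spikes \(d_{{\rm R},1},d_{{\rm T},1}\) only weight the first row and column of \(\mathbf{M}_{\rm off}\). I would therefore split \(\mathbf{E}\) into the rank-\(\le 2\) contribution carried by the first row/column (scaled by the spikes) and the bulk block \(\mathbf{E}_{\rm bulk}\triangleq\mathbf{D}_{\rm R}^{(2)}\mathbf{M}_{\rm off}^{(2)}\mathbf{D}_{\rm T}^{(2)}\) formed by the remaining indices. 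Modeling \(\mathbf{M}_{\rm off}\) as a zero-mean matrix with entry variance \(\Theta(1/N_{\rm S})\), as justified by~\eqref{eq: Eiganvector Distribution} and the independence established for Theorem~\ref{theorem: channel magnitude convergence}, its operator norm obeys the sharp random-matrix estimate \(\|\mathbf{M}_{\rm off}\|_2 = O((\sqrt{N_{\rm R}}+\sqrt{N_{\rm T}})/\sqrt{N_{\rm S}}) = O(\sqrt{N_{\max}/N_{\rm S}})\), so \(\|\mathbf{E}_{\rm bulk}\|_2 = O(\sqrt{N_{\rm S}N_{\max}})\). The spike part has larger norm, \(O(\sqrt{N_{\rm S}N_{\rm R}N_{\rm T}})\), but, being of rank at most two, it shifts the singular-value indices by at most two and its relative effect on the top singular value is only \(O(N_{\min}/\sqrt{N_{\rm S}})\), which vanishes; hence it cannot spoil the bulk.

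Finally I would assemble the relative error for a generic bulk singular value (\(k\ge 2\)): \(\|\mathbf{E}_{\rm bulk}\|_2 / |[\mathbf{H}_{\rm eff}]_{k,k}| = O\!\left(\sqrt{N_{\rm S}N_{\max}}\big/(N_{\rm S}/N_{\min})\right) = O\!\left(N_{\min}\sqrt{N_{\max}}/\sqrt{N_{\rm S}}\right)\), which is exactly the quantity assumed to tend to zero. Consequently, every singular value of \(\mathbf{H}_{\rm eff}\) (and thus of \(\mathbf{\tilde H}\)) converges, in relative terms, to the corresponding absolute diagonal entry. The main obstacle is precisely this operator-norm control of \(\mathbf{E}\) under the spiked scalings of \(\mathbf{D}_{\rm R}\) and \(\mathbf{D}_{\rm T}\): a Frobenius or naive submultiplicative bound over-counts the first-row/column leakage and would force a stronger, incorrect growth condition, so the delicate point is to isolate the finite-rank spikes from the bulk and to employ the sharp \(O(\sqrt{N_{\max}/N_{\rm S}})\) estimate for \(\mathbf{M}_{\rm off}\), which is what yields the stated threshold \(\frac{N_{\min}\sqrt{N_{\max}}}{\sqrt{N_{\rm S}}}\to 0\).
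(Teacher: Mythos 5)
Your argument is correct in substance and reaches the stated threshold, but it follows a genuinely different route from the paper. The paper works entrywise: it shows that each optimized diagonal entry asymptotically dominates the Frobenius norm of its own row and column tail (giving $\sqrt{N_{\rm S}}\gg N_{\min}$ for the first entry and $\sqrt{N_{\rm S}}\gg N_{\min}\sqrt{N_{\max}}$ for the rest), and then invokes Mirsky's theorem, $\|\sigma(\mathbf{H}_{\rm eff})-\sigma(\boldsymbol{\Lambda})\|_2\le\|\mathbf{H}_{\rm eff}-\boldsymbol{\Lambda}\|_{\rm F}$, to transfer the conclusion to the singular values. You instead use the operator-norm Weyl bound $|\sigma_k(\mathbf{H}_{\rm eff})-\sigma_k(\boldsymbol{\Lambda})|\le\|\mathbf{E}\|_2$, isolate the rank-$\le 2$ spike carried by the first row and column of $\mathbf{V}_{\rm R}^{\rm H}\boldsymbol{\Phi}\mathbf{U}_{\rm T}$, and control the bulk via the sharp random-matrix estimate $\|\mathbf{M}_{\rm off}\|_2=O(\sqrt{N_{\max}/N_{\rm S}})$. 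Your route buys a per-index bound that is uniform over the bulk and recovers the condition $N_{\min}\sqrt{N_{\max}}/\sqrt{N_{\rm S}}\to 0$ cleanly; a literal quantitative application of Mirsky to an individual bulk singular value, by contrast, is dominated by the large first-row/column entries and would demand the stronger scaling $N_{\min}^{3}N_{\max}/N_{\rm S}\to 0$ --- the paper's row-by-row domination step is the informal analogue of your spike/bulk split. One caveat: your interlacing treatment of the rank-$2$ spike only pins each $\sigma_k(\mathbf{H}_{\rm eff})$ between $\sigma_{k\pm2}$ of the deflated matrix, which does not by itself identify $\sigma_k$ with the $k$-th diagonal magnitude when the bulk entries differ by constants; to close this you should add a deflation step, noting that the $(1,1)$ entry dominates its entire row and column under $\sqrt{N_{\rm S}}\gg N_{\min}$, so the top singular pair localizes on $\mathbf{e}_1$ and the spike no longer touches the remaining spectrum. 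You also treat only the uniform allocation $N_{\rm S}/N_{\min}$, whereas the paper additionally comments on non-uniform and partial ($d\ll N_{\min}$ streams) allocations, but that does not affect the theorem as stated.
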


\begin{proof}
    We begin by assuming that the RIS elements are uniformly distributed to optimize each of the diagonal entries, i.e., \(N_{\rm S}/N_{\min}\) elements are allocated per diagonal entry. We will prove that, for the \(i\)-th diagonal element, it holds that \(|[\mathbf{H}_{\rm eff}]_{i,i}|\gg \max[\|[\mathbf{H}_{\rm eff}]_{i+1:N_{\rm R},i+1}\|_{\rm F}, \|[\mathbf{H}_{\rm eff}]_{i+1,i+1:N_{\rm T}}\|_{\rm F}]\), implying that the magnitude of each column and row originating from the \(i\)-th diagonal element is asymptotically dominated by the diagonal entry itself.

    The magnitude of the first diagonal element \(d_{{\rm R},1}d_{{\rm T},1}\left(\mathbf{v}^*_{{\rm R},1}\odot \mathbf{u}_{{\rm T},1}\right)^{\rm T}\boldsymbol{\phi}\) scales with \(N_{\rm S}\sqrt{N_{\rm T}N_{\rm R}}\) when all RIS elements are used to maximize it. In the case where only \(N_{\rm S}/N_{\min}\) elements are used, this scaling is reduced by a factor of \(1/N_{\min}\) (this is a direct extension of \cite[Theorem~2]{stylianopoulos2024signalignment}). Moreover, from Theorem~\ref{theorem: channel magnitude convergence}, we know that unoptimized terms \(\left(\mathbf{v}^*_{{\rm R},i}\odot \mathbf{u}_{{\rm T},j}\right)^{\rm T}\boldsymbol{\phi}\) scale as \(\mathcal{O}(1/\sqrt{N_{\rm S}})\), hence, accounting for the magnitude of \(d_{{\rm R},i}d_{{\rm T},j}\), we find that unoptimized values in the first column and row scale as \(\mathcal{O}\left(\sqrt{N_{\rm S}N_{\rm T}}\right)\) and \(\mathcal{O}\left(\sqrt{N_{\rm S}N_{\rm R}}\right)\), respectively. Thus, the norms of the first column and row (excluding the first element) scale identically as \(\mathcal{O}\left(\sqrt{N_{\rm R} N_{\rm S}N_{\rm T}}\right)\). For the first element to asymptotically dominate its row and column, it must hold that:
    \( \frac{N_{\rm S}\sqrt{N_{\rm T}N_{\rm R}}}{N_{\min}} \gg \sqrt{N_{\rm R} N_{\rm S}N_{\rm T}}
    \Rightarrow \sqrt{N_{\rm S}} \gg {N_{\min }}\).
    Applying the same reasoning to the second diagonal element yields the condition \(N_{\rm S}/N_{\min}\gg \sqrt{N_{\rm S}N_{\max}} \Rightarrow \sqrt{N_{\rm S}}\gg N_{\min}\sqrt{N_{\max}}\). Since all \(i\geq 2\) diagonal elements scale identically w.r.t. \(N_{\rm S}\), this condition suffices to ensure that all diagonal entries dominate the respective off-diagonal terms.
    Capitalizing on this fact yields:
    \(
    \|\mathbf{\tilde{H}}\|^2_{\rm F} = \|\mathbf{H}_{\rm eff}\|^2_{\rm F} = \sum_{i,j=1}^{N_{\min}} |[\mathbf{H}_{\rm eff}]_{i,j}|^2 \to \sum_{i=1}^{N_{\min}} |[\mathbf{H}_{\rm eff}]_{i,i}|^2,
    \) having the normalized error:
    \(
    \left|\frac{\|\mathbf{\tilde{H}}\|^2_{\rm F} - \sum_{i=1}^{N_{\min}} |[\mathbf{H}_{\rm eff}]_{i,i}|^2}{\|\mathbf{\tilde{H}}\|^2_{\rm F}}\right| \to\mathcal{O}\left(\frac{N_{\min}^2N_{\max}}{N_{\rm S}}\right).
    \)

    Then, using Mirsky’s theorem \cite{mirsky1960inequality} which bounds the difference between the singular values of two matrices by the Frobenius norm of their difference, leads to
    \(
    \|\sigma(\mathbf{H}_{\rm eff}) - \sigma(\boldsymbol{\Lambda})\|_2 \leq \|\mathbf{H}_{\rm eff} - \boldsymbol{\Lambda}\|_{\rm F},
    \)
    where \(\boldsymbol{\Lambda}\) is the (non-square) diagonal matrix containing the diagonal elements of \(\mathbf{H}_{\rm eff}\). Since the off-diagonal terms asymptotically vanish under the stated condition, this implies that \(\mathbf{H}_{\rm eff}\)'s singular values converge to those of the diagonal matrix \(\boldsymbol{\Lambda}\), with the same normalized error as the Frobenius norm approximation presented previously. Regarding non-uniform element allocation, the latter results hold under the mild assumption that the number of RIS elements allocated to each stream differs by a finite amount, or that the diagonal entries that are not maximized are significantly smaller than $N_{\min}$. Finally, in the case where only a finite number $d \ll N_{\min}$ of streams have a nonzero number of allocated elements, each optimized diagonal element asymptotically overpowers the remaining matrix $[\mathbf{H}_{\rm eff}]_{d+1:N_{\rm R},\, d+1:N_{\rm T}}$ if:
    \(\frac{N_{\rm S}}{d} \gg N_{\max} \sqrt{N_{\rm S}} \quad \Rightarrow \quad \sqrt{N_{\rm S}} \gg d N_{\max}\). However, the \(N_{\min}-d\) remaining singular values cannot be inferred from the diagonal entries of \(\mathbf{H}_{\rm eff}\); nevertheless, their magnitude is known to be bounded by \(\mathcal{O}\left(N_{\max}\sqrt{N_{\rm S}}\right)\). 
\end{proof}
\vspace{-0.2 cm}
\begin{remark} Theorems~\ref{theorem: channel magnitude convergence} and~\ref{theorem: channel diagonalization} generalize to the case of RISs with continuous phase responses, for which the same scaling laws apply. To prove this, one can replace SA with phase alignment, i.e., instead of aligning the signs of the~real/imaginary parts of the involved vectors, their phases are fully aligned.
\end{remark}

Following this ``diagonalization'' property of large RISs, the capacity can be re-expressed at the regime where Theorem~\ref{theorem: channel diagonalization} holds as a straightforward function of \(\boldsymbol{\phi}\). Specifically, we replace the eigenvalues of \(\mathbf{H}_{\rm eff}\mathbf{H}^{\rm H}_{\rm eff}\) with the squared magnitudes of \(\mathbf{H}_{\rm eff}\)'s diagonal elements and approximate \(\mathcal{C}(\boldsymbol{\phi})\) with:
\begin{align}\label{eq: capacity with diagonalization}
    \hat{\mathcal{C}}(\boldsymbol{\phi})\!\triangleq\!\sum_{i=1}^{N_{\min}}\!\!\log_2\!\!\left(\!1\!+\! \frac{\rm SNR}{N_{\rm T}} d^2_{{\rm R},i}d^2_{{\rm T},i} \!\left| \left( \mathbf{v}_{{\rm R},i}^{*} \odot \mathbf{u}_{{\rm T},i} \right)^{\!\rm T} \!\!\boldsymbol{\phi}\right|^2\right)\!\!.
\end{align}

Furthermore, as discussed in Theorem~\ref{theorem: channel magnitude convergence}, there are two cases for the distributions of \(\mathbf{v}^*_{{\rm R},i}\) and \(\mathbf{u}_{{\rm T},i}~\forall i=1,2,\ldots,N_{\min}\): \textit{i}) deterministic LoS components; and \textit{ii}) complex normal vectors \eqref{eq: Eiganvector Distribution}. For either case, when SA is applied to maximize a specific term, \(|\left( \mathbf{v}_{{\rm R},i}^{*} \odot \mathbf{u}_{{\rm T},i} \right)^{\rm T} \boldsymbol{\phi}|^2 \geq 0.25\) holds~\cite[Theorems~2 and~3]{stylianopoulos2024signalignment}. Accordingly, if only a portion \(\sqrt{p_i}N_{\rm S}\) of the \(N_{\rm S}\) elements is used to optimize the \(i\)-th term, then, it holds that \(| \left( \mathbf{v}_{{\rm R},i}^{*} \odot \mathbf{u}_{{\rm T},i} \right)^{\rm T} \boldsymbol{\phi}|^2\geq 0.25p_i\). Hence, assuming that SA is employed for the RIS design and that the RIS is partitioned into $\sqrt{p_i}N_{\rm S}$ elements per stream, the capacity is lower bounded by
\(
\mathcal{C}(\boldsymbol{\phi}) \geq \sum_{i=1}^{N_{\min}} \log_2 \left(1 + 0.25 \frac{{\rm SNR}}{N_{\rm T}} d_{{\rm R},i}^2 d_{{\rm T},i}^2 p_i \right).
\)
Note that this lower bound is \emph{deterministic} since the singular values have been shown to converge to deterministic limits (see \eqref{eq: convergence of all eigenvalues}); thus, it depends only on the element allocation $p_i$, the channels' Ricean factors, and the sizes of the RIS and transceiver arrays. Exploiting this expression, we henceforth focus on optimizing the RIS element allocation, i.e., the fraction of elements assigned to each stream, by solving:
\begin{align*}
    \mathcal{OP}_2\!\!:\max_{p_i\,\forall i}\!\sum_{i=1}^{N_{\min}}\!\!\log_2\left(\!1\!+\! 0.25\frac{\rm SNR}{N_{\rm T}} d^2_{{\rm R},i}d^2_{{\rm T},i} p_i\!\right)\,{\text{s.t.}}\! \sum_{i=1}^{N_{\min}}\!\!\sqrt{p_i}\!=\!1.
\end{align*}

Interestingly, the latter problem can be solved without requiring instantaneous CSI. Additionally, it can be easily seen that $\mathcal{OP}_2$ resembles the well known waterfilling framework~\cite[Section~4.4.2]{heath}, with the main difference being the constraint, which in the present case involves the square roots of \(p_i\)'s. This structural difference complicates the problem, since, although the objective function is concave, the constraint is also concave. As a result, the feasible set is non-convex, and global optimality cannot be guaranteed using standard convex optimization tools. To address this issue, we employ a Successive Convex Approximation (SCA) strategy, solving iteratively a sequence of convex problems formed by locally linearizing the non-convex constraint.  Specifically, at iteration \( t + 1 \), we approximate the constraint \( \sum_{i=1}^{N_{\min}} \sqrt{p_i^{(t+1)}} = 1 \) via a first-order Taylor expansion around the current points \( p_i^{(t)} \)'s, which correspond to the solutions obtained at iteration \( t \). This yields the following convex surrogate constraint:
\begin{equation}\label{eq: Waterfilling surrogate constraint}
\sum_{i=1}^{N_{\min}} \underbrace{\frac{1}{2\sqrt{p_i^{(t)}}}}_{\triangleq c_i} p_i^{(t+1)} = \underbrace{1 - \sum_{i=1}^{N_{\min}} \left( \sqrt{p_i^{(t)}} - \frac{p_i^{(t)}}{2\sqrt{p_i^{(t)}}} \right)}_{\triangleq\gamma}.
\end{equation}
The solution to this generalized waterfilling case is obtained by applying the Karush-Kuhn-Tucker (KKT) conditions to the resulting convex subproblem. This leads to the semi-closed-form update \(p_i^{(t+1)} = \left[ \frac{1}{\eta c_i} - \frac{4 N_{\rm T}}{{\rm SNR}\, d^2_{{\rm R},i} d^2_{{\rm T},i}} \right]^+\!\)
with \( \eta \) being the effective water level and \(a^{+}\triangleq \max[0,a]\). The optimal \( \eta \) can be computed by substituting the latter expression into~\eqref{eq: Waterfilling surrogate constraint} and then solving the resulting equation. The iterations stop when the change in the objective function between successive iterations falls below a predefined threshold \(\epsilon\), while ensuring that the original non-convex constraint is met. When finalizing the allocation of the RIS elements, their configuration is designed to follow the SA strategy for each stream. It is noted that the arrangement of elements is irrespective of the resulting capacity in the asymptotic regime. The complexity of the proposed element allocation scheme is \(\mathcal{O}(J_w N_{\min}\log_2(N_{\min}))\), where \(J_w\) are the iterations needed for the SCA convergence.
\begin{remark}
The number of non-zero \(p_i\)'s, denoted as \(N_p\), determines the rank up to which one needs to estimate the left, \(\mathbf{V}_{\rm R}^{\rm H}\), and right, \(\mathbf{U}_{\rm T}\), singular spaces of the \(\mathbf{H}_{\rm R}\) and \(\mathbf{H}_{\rm T}\) channels to configure the RIS, whereas \(N_p\) itself is determined by the Ricean factors and array dimensions. This also indicates that only the channel terms that directly multiply the RIS (i.e., $\mathbf{V}_{\rm R}^{\rm H}$ and $\mathbf{U}_{\rm T}$) are necessary. Moreover,~only~the~signs of the real and imaginary parts of \(\mathbf{v}_{{\rm R},i}\) and \(\mathbf{u}_{{\rm T},i}\) \(\forall i=1,2,\ldots,N_p\) are needed, implying that this channel features' estimation does not require high accuracy, as it is insensitive to the absolute values of the latter vectors. For example, for multipath far-field \(\mathbf{H}_{\rm T}\) and \(\mathbf{H}_{\rm R}\) channels, the RIS configuration scheme requires knowledge of only the \(N_p\) dominant directions from which signals arrive to and depart from it. Finally, it is worth noting that RISs integrated with sensing capabilities have been lately shown to be capable of estimating exactly these left and right spaces of the TX-RIS and RIS-RX channels~\cite{alexandropoulos2025receivingrissenablingchannel,10042447}.
\end{remark}

All in all, the aforedecribed \textit{asymptotic diagonalization} property of large RISs allows us to approximate the eigenvalues of $\mathbf{H}_{\rm eff}\mathbf{H}_{\rm eff}^{\rm H}$ (which are identical to those of $\mathbf{\tilde{H}}\mathbf{\tilde{H}}^{\rm H}$) from $\mathbf{H}_{\rm eff}$'s diagonal entries. This in turn enables the consideration of \eqref{eq: capacity with diagonalization}, whose accuracy is validated in Fig.~\ref{fig: Capacity Approximation}. Therein, the NMSE between \(\mathcal{C}(\boldsymbol{\phi})\) and \(\hat{\mathcal{C}}(\boldsymbol{\phi})\) is shown averaged over \(200\) Monte Carlo (MC) runs, considering random RIS element allocation with uniform distribution per spatial stream. 


\section{Performance Evaluation}\label{sec: Performance Evaluation}

\begin{figure*}[!t]
    \centering
    \begin{subfigure}[b]{0.33\textwidth}
        \centering
        \includegraphics[width=\textwidth]{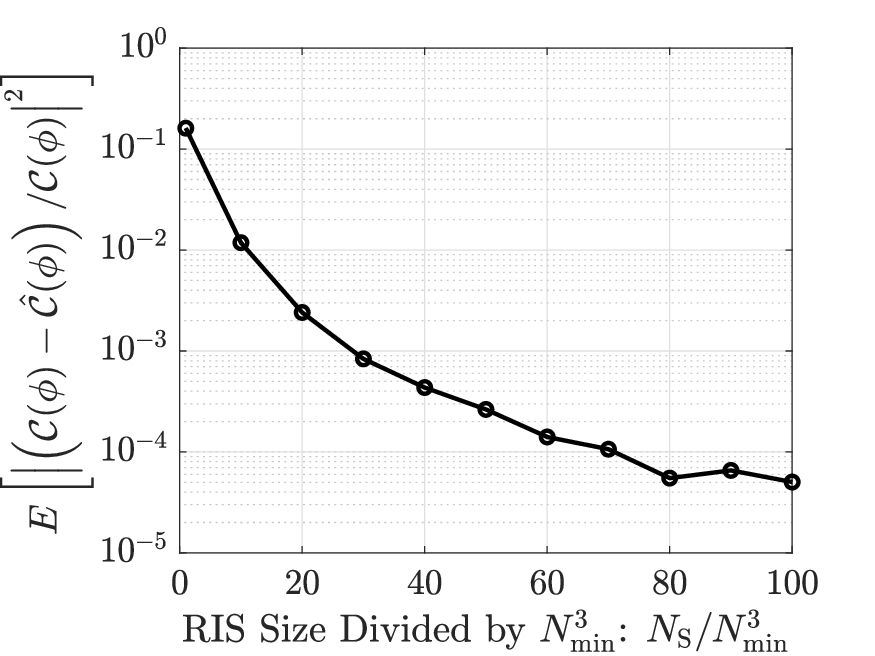} 
        \caption{}
        \label{fig: Capacity Approximation}
    \end{subfigure}
    \hfill
    \begin{subfigure}[b]{0.33\textwidth}
        \centering
        \includegraphics[width=\textwidth]{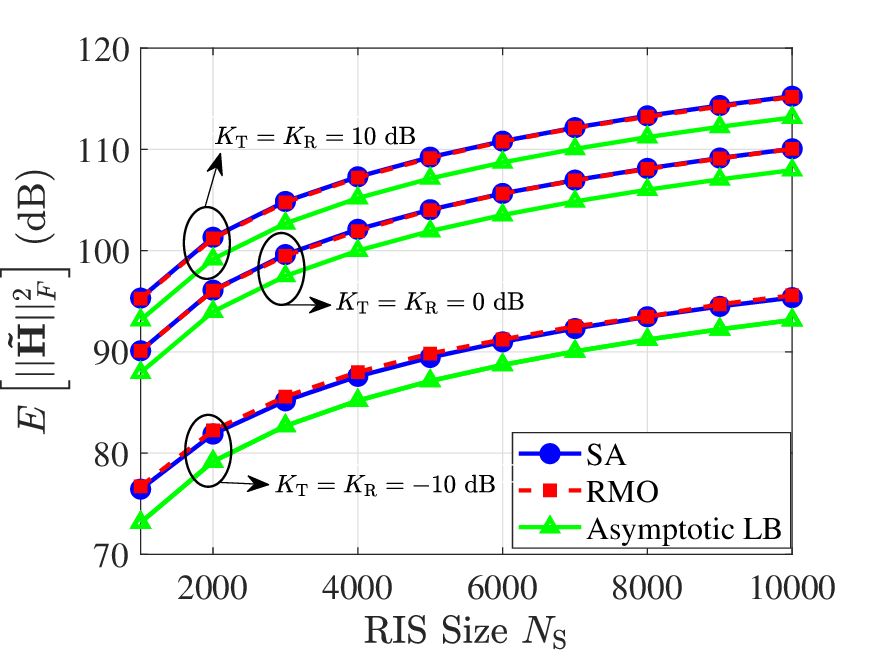} 
        \caption{}
        \label{fig: Channel Magnitude Optimization}
    \end{subfigure}
    \begin{subfigure}[b]{0.32\textwidth}
        \centering
        \includegraphics[width=\textwidth]{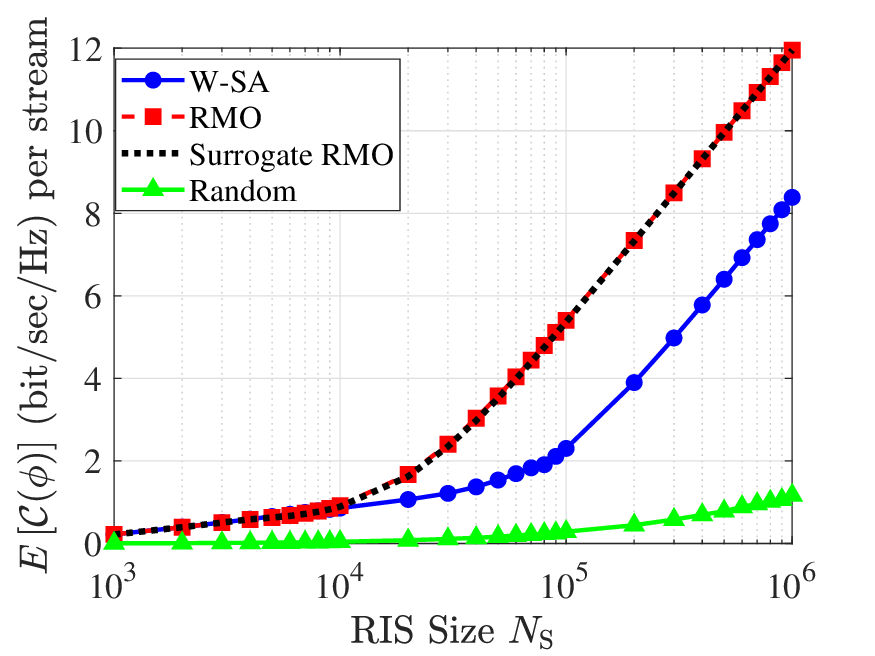} 
        \caption{}
        \label{fig: capacity optimisation}
    \end{subfigure}
    \caption{\small 
    The left subfigure (Fig.~\ref{fig: Capacity Approximation}) validates the asymptotic diagonalization property established in Theorem~\ref{theorem: channel diagonalization}, illustrating the NMSE between the actual $\mathcal{C}(\boldsymbol{\phi})$ and the approximated $\hat{\mathcal{C}}(\boldsymbol{\phi})$ capacities, with the latter assuming a diagonal channel; simulation parameters used: $N_{\min}=N_{\rm R}=N_{\rm T}=16$, $K_{\rm T}=K_{\rm R}=0$~dB, and $N_{\rm S}=1\!:\!10\!:\!100\times N_{\max}^3$. Element allocation per stream was random with uniform distribution and weighted to meet $\mathcal{OP}_2$'s constraint. The middle subfigure (Fig.~\ref{fig: Channel Magnitude Optimization}) illustrates the optimized channel magnitude $||\mathbf{\tilde{H}}||_{\rm F}^2$, considering $N_{\rm T}=N_{\rm R}=100$ and $K_{\rm T}=K_{\rm R}=\{-10,0,10\}$~dB. In particular, the performance of SA is compared with RMO and the asymptotic Lower Bound (LB) derived in Section~\ref{sec: Channel Gain Maximization}: $0.25\,\frac{K_{\rm T}K_{\rm R}}{(1+K_{\rm T})(1+K_{\rm R})}N_{\rm S}^2N_{\rm T}N_{\rm R}$. The right subfigure (Fig.~\ref{fig: capacity optimisation}) depicts the achievable rate with the proposed W-SA design against two benchmarks: RMO applied to the original capacity expression~\eqref{eq: capacity formula} and RMO applied to the surrogate formulation~\eqref{eq: capacity with diagonalization}; simulation parameters used: $N_{\rm T}=N_{\rm R}=10$ and $K_{\rm T}=K_{\rm R}=0$~dB. It is shown that, although the RMO variants achieve up to $30\%$ higher capacity than W-SA, the latter requires up to $10^3$ times less computation time (see Table~\ref{tab: running-time capacity}).
    \vspace{-0.2 cm}}
    \label{fig: performance evaluation figures}
\end{figure*}

In this section, we evaluate the proposed approaches in terms of channel gain and capacity performances, benchmarking them against the most established approach for optimizing RISs, namely the Riemannian Manifold Optimization (RMO) framework \cite{Liu2024Optimizationtechniques}. This framework refers to gradient descent/ascent methods that are performed over the complex circle, which is ideal for tuning the phase shifts of RISs. Regarding channel gain maximization, the complexity of performing RMO is \(\mathcal{O}(j_{{\rm RMO}}N_{\rm S} N_{\rm R} N_{\rm T})\), with \(j_{{\rm RMO},1}\) being the iterations till convergence of RMO,
while the complexity of performing SA for the same objective is \(\mathcal{O}(N_{\rm S})\), since it is essentially a closed-form assignment that only requires the element-wise multiplication \( \mathbf{a}^{*}_{\rm S}(\phi^{\rm O}_{\rm S},\theta^{\rm O}_{\rm S})\odot \mathbf{a}_{\rm S}(\phi^{\rm I}_{\rm S},\theta^{\rm I}_{\rm S})\), which is easily parallelizable. For capacity optimization, the asymptotic complexity of W-SA is \(\mathcal{O}(J_w N_{\min}\log_2(N_{\min}) + N^2_{\min}N_{\rm S})\), where the latter term includes the complexity of computing the SVDs of \(\mathbf{H}^{\rm H}_{\rm R}\) and \(\mathbf{H}_{\rm T}\). Performing RMO to maximize the capacity requires \(\mathcal{O}(j_{{\rm RMO}} (N^3_{\min} + N^2_{\min}N_{\rm S}))\), which, although looking comparable, it will be shown that the time complexity, due to it being an iterative scheme, is orders of magnitude larger, compared to the semi-closed-form proposed solution. For both capacity and channel gain optimization with RMO, the RIS configuration was quantized to meet the \(1\)-bit tunability constraint. Lastly, all results presented in this section were averaged over \(200\) MC realizations.

\subsection{Results for Channel Gain Maximization}
According to Theorem~\ref{theorem: channel magnitude convergence}, the optimal strategy for maximizing the channel magnitude is to perform SA on the LoS component of the channel, provided that $K_{\rm T}, K_{\rm R} > 1/N_{\rm T}, 1/N_{\rm R}$, respectively. The corresponding asymptotic Lower Bound (LB) derived for this configuration is $0.25\,\frac{K_{\rm T}K_{\rm R}}{(1+K_{\rm T})(1+K_{\rm R})} N_{\rm S}^2 N_{\rm T} N_{\rm R}$. Figure~\ref{fig: Channel Magnitude Optimization} compares this bound with the simulated performance of SA as the number of RIS elements $N_{\rm S}$ increases, demonstrating that the proposed LB is tight, with only a $2$–$3$~dB deviation. Furthermore, SA is benchmarked against RMO, which is an iterative optimization algorithm, in contrast to the closed-form nature of SA. The results show that performing SA on the LoS component yields near-optimal performance even under non-LoS-dominated conditions:  the curves of SA and RMO align completely and are closely followed by the asymptotic LB. It is also worth noting that RMO entails substantially higher computational complexity 
and requires full CSI, whereas SA relies only on the estimation of the LoS component. The average runtimes are summarized in Table~\ref{tab: running-time channel gain}, indicating that RMO requires approximately $10^5$ times more computation time to achieve identical results with the proposed SA approach.


\begin{table}[t]
  \centering
  \setlength{\tabcolsep}{3pt}
  \renewcommand{\arraystretch}{1.05}

  \begin{subtable}[t]{\linewidth}
    \centering
    \small
    \begin{tabular}{c|c|c}
      \hline
      \(N_{\rm S}\) & SA (sec) & RMO (sec) \\
      \hline
      \(2\times 10^{3}\) & \(7.1\times10^{-5}\) & 4.0 \\
      \(4\times 10^{3}\) & \(1.2\times10^{-4}\) & 6.7 \\
      \(6\times 10^{3}\) & \(1.4\times10^{-4}\) & 9.6 \\
      \(8\times 10^{3}\) & \(1.9\times10^{-4}\) & 11.5 \\
      \(10^{4}\)         & \(2.4\times10^{-4}\) & 15.1 \\
      \hline
    \end{tabular}
    \caption{Channel gain maximization.} \label{tab: running-time channel gain}
  \end{subtable}

  \begin{subtable}[t]{\linewidth}
    \centering
    \small
    \begin{tabular}{c|c|c|c}
      \hline
      \(N_{\rm S}\) & W-SA (sec) & Surrogate RMO (sec) & RMO (sec) \\
      \hline
      \(5\times 10^{3}\) & \(4.16\times10^{-4}\) & \(3.17\times10^{-2}\) & \(1.62\times10^{-1}\) \\
      \(5\times 10^{4}\) & \(6.00\times10^{-3}\) & \(3.51\times10^{-1}\) & \(2.32\) \\
      \(5\times 10^{5}\) & \(4.85\times10^{-2}\) & \(2.46\)               & \(6.16\) \\
      \(10^{6}\)         & \(9.29\times10^{-2}\) & \(4.75\)               & \(11.5\) \\
      \hline
    \end{tabular}
    \caption{Capacity maximization.}  \label{tab: running-time capacity}
  \end{subtable}

  \caption{Average runtime versus \(N_{\rm S}\) in a computer with a $10$-th generation Intel Core i5 CPU and $32$ GB RAM.}
  \label{tab:runtime-both}
\end{table}


\subsection{Results for Capacity Maximization}

According to Theorem~\ref{theorem: channel diagonalization}, maximizing the diagonal entries of $\mathbf{H}_{\rm eff}$ ensures that they asymptotically converge to $\tilde{\mathbf{H}}$'s singular values. Consequently, the capacity can be expressed as a sum over the squared diagonal entries of $\mathbf{H}_{\rm eff}$, enabling the capacity optimization scheme described in Section~\ref{sec: Capacity Optimization}, referred to as W-SA. This proposed scheme is closed-form w.r.t. the RIS size, while the SCA-based iterative step concerns only the element allocation per stream. The latter problem is a low-dimensional one, involving $N_{\min}$ variables, that can be computed offline, depending solely on the Ricean factors and the transceivers and RIS dimensions. As a result, the computational complexity of W-SA scales identically to that of a single RMO iteration. Moreover, W-SA requires only partial CSI knowledge, specifically, the real and imaginary signs of the $N_p$ dominant left and right singular vectors $\mathbf{u}_{\rm T}$ and $\mathbf{v}^*_{\rm R}$.

Figure~\ref{fig: capacity optimisation} compares the capacity performance of the proposed W-SA with two benchmarks: RMO applied to the original capacity expression in \eqref{eq: capacity formula} and RMO applied to the surrogate formulation in \eqref{eq: capacity with diagonalization}. The nearly identical performance of the two RMO variants validates the underlying intuition that, maximizing only the diagonal entries of $\mathbf{H}_{\rm eff}$, suffices for capacity maximization, corroborating the derived capacity approximation. Although RMO achieves slightly higher performance compared to W-SA, with up to $30\%$ difference observed for $N_{\rm S} > 10^4$, this gap arises because the proposed design optimizes each eigenvalue group independently rather than jointly maximizing all eigenvalues. Nevertheless, W-SA drastically reduces computational complexity by reducing the optimization to an iterative scheme over $N_{\min}$ variables and a closed-form step in $N_{\rm S}$. The runtime comparison in Table~\ref{tab: running-time capacity} highlights the efficiency gains: W-SA runs $10^2$–$10^3$ times faster than RMO, while RMO on the surrogate capacity is approximately $3\text{-}7\times$ faster than RMO on the original formulation. This is attributed to the fact that $\hat{\mathcal{C}}(\boldsymbol{\phi})$ avoids determinant computations and involves only summations.

\section{Conclusion}

This paper investigated the asymptotic performance of $1$-bit RIS-aided MIMO communication systems under Ricean fading, focusing on the practically relevant regime where the RIS size scales asymptotically larger than the transceiver dimensions. It was shown that channel gain maximization can be achieved through SA using only the LoS components of TX-RIS and RIS-RX channels, and instantaneous performance guarantees for this strategy were established. Furthermore, the necessary conditions and RIS configuration enabling OTA diagonalization of the RIS-parametrized MIMO channel were derived. This property facilitated the formulation of a deterministic LB for the instantaneous capacity motivating a waterfilling-inspired capacity optimization framework based on RIS element allocation per communication stream. Overall, this work provides new insights into the optimization and scaling laws of RIS-aided MIMO systems, demonstrating that large RISs can inherently diagonalize the effective channel, alleviating this burden from the precoder, while offering performance guarantees and low complexity configuration.

\FloatBarrier
\bibliographystyle{IEEEtran}
\vspace{-0.1cm}
\bibliography{references}

\end{document}